\theoremstyle{thmstyleone}%
\theoremstyle{thmstyletwo}%
\newtheorem{lemma}{Lemma}%
\newtheorem{rk}{Remark}%
\theoremstyle{thmstylethree}%
\newtheorem{defn}{Definition}%
\newtheorem{thm}{Theorem}%
\newtheorem{pro}{Proposition}%
\begin{document}

\title[Alternative Gibbs measure for fertile three-state Hard-Core models on a Cayley tree]{Alternative Gibbs measure for fertile three-state Hard-Core models on a Cayley tree}


\author[1,2,3]{\fnm{Rustamjon} \sur{Khakimov}}\email{rustam-7102@rambler.ru}
\equalcont{These authors contributed equally to this work.}

\author*[1,2]{\fnm{Muhtorjon} \sur{Makhammadaliev}}\email{mmtmuxtor93@mail.ru}

\author[2]{\fnm{Kamola} \sur{Umrzakova}}\email{kamola-0983@mail.ru}
\equalcont{These authors contributed equally to this work.}

\affil[1]{\orgname{Institute of Mathematics, Academy of science}, \orgaddress{\street{University street 9}, \city{Tashkent}, \postcode{100174}, \country{Uzbekistan}}}

\affil[2]{\orgname{Namangan State University}, \orgaddress{\street{Boburshox street 161}, \city{Namangan}, \postcode{160107}, \country{Uzbekistan}}}

\affil[3]{\orgname{New Uzbekistan University}, \orgaddress{\street{Movarounnahr street 1}, \city{Tashkent}, \postcode{100000}, \country{Uzbekistan}}}


\abstract{We consider fertile three-state Hard-Core (HC) models with the activity parameter $\lambda>0$ on a Cayley tree. It is known that there exist four types of such models: "wrench"\,, "wand"\,, "hinge"\, and "pipe"\,. In cases "wand"\, and "hinge"\
on a Cayley tree of arbitrary order a complete description of translation-invariant Gibbs measures  is obtained. The conception of alternative Gibbs measure is introduced and in the case "wand"\, translational invariance conditions for alternative Gibbs measures are found. Also, we show that the existence of alternative Gibbs measures which are not translation-invariant.}

\keywords{Cayley tree, configuration, hard-core model, Gibbs measure, translation-invariant measure, Alternating Gibbs Measure.}


\pacs[MSC Classification]{20B07, 20E06.}

\maketitle

\section{Introduction}

The theory of Gibbs measures plays an important role in the study of the thermodynamic properties of physical and biological systems. The Gibbs measure is one of the main means of describing systems consisting of a large number of particles in statistical mechanics. It is known that each limiting Gibbs measure is associated with one phase of the physical system. Therefore, in the theory of Gibbs measures, one of the important problems is the existence of a phase transition, i.e. when a physical system changes its state with a change in temperature. This happens when the Gibbs measure is not unique. In this case, the temperature at which the state of a physical system changes is usually called critical. In addition, it is known that the set of all limit Gibbs measures forms a convex compact set (see, \cite {6}, \cite{Pr}, \cite{R}, \cite{Si}).

A Cayley tree $\Im^k$ of order $ k\geq 1 $ is an infinite tree, i.e.
a graph without cycles with exactly $k+1$ edges originate from each vertex. Let $\Im^k=(V,L,i)$, where $V$ is the set of vertices
$\Im^k$, $L$ is the set of its edges and $i$ is the incidence function,
which associates each edge $l\in L$ with its end points $x,y\in V$. If $i(l) = \{ x,y \} $, then $x$ and $y$ are called {\it nearest neighbors of the vertex} and are denoted by $l=\langle x,
y \rangle $.

For fixed $x^0\in V$ we set
$$W_n=\{x\in V\,| \, d(x,x^0)=n\}, \qquad V_n=\bigcup_{m=0}^n W_m,$$
where $d(x,y)$ is the distance between vertices $x$ and $y$ on the Cayley tree, i.e.
the number of edges of the shortest path connecting vertices $x$ and $y$.
We will write $x\prec y$ if the path from $x^0$ to $y$ goes through $x$.
A vertex $y$ is called a direct successor of vertex $x$ if $y\succ x$ and $x,y$
are the nearest neighbors. Note that in $\Im^k$ every vertex $x\neq x^0$
has $k$ direct successors, and the vertex $x^0$ has $k+1$ successors. We denote the successors of the vertex $x$ by $S(x)$, i.e. if $x\in W_n$, then
$$S(x)=\{y_i\in W_{n+1} | d(x,y_i)=1, i=1,2,\ldots, k \}.$$

\textbf{The \emph{HC}-model.}
Unlike other models, the Hard-Core (\emph{HC}) model places restrictions on spin values. Hard constraints arise in fields as diverse as combinatorics, statistical mechanics, and queuing theory. In particular, the \emph{HC}-model arises when studying random independent sets of the graph \cite {bw1}, \cite {gk}, when studying gas molecules on the lattice \cite{Ba}.
In queuing theory, a loss network is a stochastic model of communication networks in which calls are routed around a network between nodes. The links between nodes have finite capacity and thus some calls arriving may find no route available to their destination. These calls are lost from the network. The loss network was first studied by A.K. Erlang for a single telephone link (see \cite {ZZ}).

\emph{HC}-model arises as a simple example of a loss network with nearest-neighbor exclusion. The state $\sigma(x)$ at each node $x$ of the Cayley tree can be 0, 1 and 2. We have Poisson flow of calls of rate $\lambda$ at each site $x$, each call has an exponential
duration of mean 1. If a call finds the node in state 1 or 2 it is lost. If it finds the node in state 0 then things depend on the state of the neighboring sites. If all neighbors are in state 0, the call is accepted and the state of the node becomes 1 or 2 with equal probability $1/2$. If at least one neighbor is in state 1, and there is no neighbor in state 2 then the state of the node becomes 1. If at least one neighbor is in state 2 the call is lost.

Let $\Phi = \{0,1,2\}$ and $\sigma\in \Omega=\Phi^V$ be the configuration, i.e., $\sigma=\{\sigma(x)\in \Phi: x\in V\}$. In other words, in this model, each vertex
$x$ is assigned to one of the values $\sigma (x)\in \Phi=\{0,1,2\}$.
The values of $\sigma (x)\neq0$ mean that the vertex $x$ is `occupied', and
the value $\sigma (x)=0$ means that the vertex $x$ is `vacant'.
The set of all configurations on $V$ ($V_n$) is denoted by $\Omega$ ($\Omega_{V_n}$).

Consider the set $\Phi$ as the set of vertices of some graph
$G$. Using the graph $G$ we define a $G$-admissible configuration
in the following way. The $\sigma$ configuration is called
$G$-\textit{admissible configuration} on the Cayley tree (in $V_n$) if $\{\sigma (x),\sigma (y)\}$-edge of the graph $G$ for any
the nearest pair of neighbors $x,y$ from $V$ (from $V_n$). Let us denote the
set of $G$-admissible configurations via $\Omega^{a,G}$
($\Omega_{V_n}^{a,G}$).

The activity set \cite{bw} for the graph $G$ is the function $\lambda:G
\to R^3_+$. Value $\lambda_i$ of function $\lambda$ at vertex
$i\in\{0,1,2\}$ is called its ``activity''.

For given $G$ and $\lambda$, we define the Hamiltonian of the $G-$\emph{HC}-model as
 $$H^{\lambda}_{G}(\sigma)=\left\{%
\begin{array}{ll}
     \sum\limits_{x\in{V}}{\log \lambda_{\sigma(x)},} \ \ \ \mbox{if} \ \sigma \in\Omega^{a,G} \mbox{,} \\
   +\infty ,\ \ \ \ \ \ \ \ \ \  \ \ \ \mbox{if} \ \sigma \ \notin \Omega^{a,G} \mbox{.} \\
\end{array}%
\right. $$

\textbf{Finite-dimensional distributions.}
The concatenation of configurations $\sigma_{n-1}\in\Phi ^{V_{n-1}}$ and $\omega_n\in\Phi^{W_{n}}$ is defined by
$$
\sigma_{n-1}\vee\omega_n=\{\{\sigma_{n-1}(x), x\in V_{n-1}\},\{\omega_n(y), y\in W_n\}\}.
$$

Let $\mathbf{B}$ be the $\sigma$-algebra generated by
cylindrical sets with finite base of $\Omega^{a,G}$. For any $n$ we let $\mathbf{B}_{V_n}=\{\sigma\in\Omega^{a,G}:
\sigma|_{V_n}=\sigma_n\}$ denote the subalgebra of $\mathbf{B},$ where
$\sigma|_{V_n}-$  restriction of $\sigma$ to $V_n$ and $\sigma_n: x\in V_n
\mapsto \sigma_n(x)$ an admissible configuration in $V_n.$

\begin{defn}
For $\lambda >0$ the Gibbs \emph{HC}-measure is
probability measure $\mu$ on $(\Omega^{a,G} , \textbf{B})$ such that for
any $n\geq1$ and $\sigma_n\in \Omega^{a,G}_{V_n}$
$$
\mu \{\sigma \in \Omega^{a,G}:\sigma|_{V_n}=\sigma_n\}=
\int_{\Omega^{a,G}}\mu(d\omega)P_n(\sigma_n|\omega_{W_{n+1}}),
$$
where
$$
P_n(\sigma_n|\omega_{W_{n+1}})=\frac{e^{-H(\sigma_n)}}{Z_{n}
(\lambda ; \omega |_{W_{n+1}})}\textbf{1}(\sigma_n \vee \omega
|_{W_{n+1}}\in\Omega^{a,G}_{V_{n+1}}).
$$
\end{defn}
Here $Z_n(\lambda; \omega|_{W_{n+1}})$ is a normalization factor with boundary condition $\omega|_{W_n}$:
$$
Z_n (\lambda ; \omega|_{W_{n+1}})=\sum_{\widetilde{\sigma}_n \in \Omega_{V_n}} e^{-H(\widetilde{\sigma}_n)}\textbf{1}(\widetilde{\sigma}_n\vee
\omega|_{W_{n+1}}\in \Omega^{a,G}_{V_{n+1}}).
$$

In this paper we consider the case $\lambda_0=1, \ \lambda_1=\lambda_2=\lambda$.
Äëÿ $\sigma_n\in\Omega_{V_n}^{a,G}$ ïîëîæèì
$$\#\sigma_n=\sum\limits_{x\in V_n}{\mathbf 1}(\sigma_n(x)\geq 1),$$
i.e., the number of occupied vertices of $V_n$.

Let $z:\;x\mapsto z_x=(z_{0,x}, z_{1,x}, z_{2,x}) \in R^3_+$
vector-valued function on $V$. For $n\geq1$ and $\lambda>0$
consider the probability measure $\mu^{(n)}$ on $\Omega_{V_n}^{a,G}$,
defined as
\begin{equation}\label{e1}
\mu^{(n)}(\sigma_n)=\frac{1}{Z_n}\lambda^{\#\sigma_n} \prod_{x\in
W_n}z_{\sigma(x),x}.
\end{equation}

Here $Z_n$ is the normalizing divisor:
$$
Z_n=\sum_{{\widetilde\sigma}_n\in\Omega^{a,G}_{V_n}}
\lambda^{\#{\widetilde\sigma}_n}\prod_{x\in W_n}
z_{{\widetilde\sigma}(x),x}.
$$

The sequence of probability measures $\mu^{(n)}$ is said to be
consistent if for any $n\geq 1$ and
$\sigma_{n-1}\in\Omega^{a,G}_{V_{n-1}}$ the equality holds
\begin{equation}\label{e2}
\sum_{\omega_n\in\Omega_{W_n}}
\mu^{(n)}(\sigma_{n-1}\vee\omega_n){\mathbf 1}(\sigma_{n-1}\vee\omega_n\in\Omega^{a,G}_{V_n})=\mu^{(n-1)}(\sigma_{n-1}).
\end{equation}

\begin{defn}
 The measure $\mu$ that is the limit of a sequence $\mu^{(n)}$ defined by (\ref{e1}) with consistency condition (\ref{e2}) is called the splitting \emph{HC}-Gibbs measure (SGM) with $\lambda>0$ corresponding to the function $z:\,x\in V\setminus\{x^0\}\mapsto z_x$.
Moreover, an \emph{HC}-Gibbs measure corresponding to a constant function $z_x\equiv z$
is said to be translation-invariant (TI).
\end{defn}

\begin{defn}(\cite{bw})
The graph is called fertile if there is an activity set $\lambda$ such that the corresponding Hamiltonian has at least two TIGMs.
\end{defn}

\textbf{Problem statement.} The main task is to study the structure of the set $\mathcal G(H)$ of all Gibbs measures corresponding to a given Hamiltonian $H$.

A measure $ \mu \in \mathcal G(H) $ is called extreme if it cannot be expressed as $ \mu = \lambda \mu_1 + (1-\lambda)\mu_2$ for some $ \mu_1, \mu_2 \in \mathcal G(H) $ with $ \mu_1 \ne \mu_2 $.

As noted above, the set $ \mathcal G(H) $ of all Gibbs measures (for a given Hamiltonian $H$) is a nonempty convex compact set $ \mathcal G (H) $ in the space of all probability measures on $ \Omega$.

Using theorem (12.6) in \cite{6} and section 1.2.4 in \cite{BR} (see also Remark 2.7 in \cite{R2}), we can note the following.

\begin {itemize}
\item \ \ \emph{Any extreme Gibbs measure
$ \mu \in \mathcal G(H) $ is a SGM; therefore, the problem of describing Gibbs measures
reduces to describing the set of SGMs. For each fixed temperature, the description of the set $\mathcal G(H)$ is equivalent to a complete description of the set of all extreme SGMs, and hence we are only interested in SGMs on the Cayley tree.}

\item \ \ \emph{Any SGM corresponds to the solution of Eq. (\ref{e3}) (see below). Thus, our main
task reduces to solving functional equation (\ref{e3})}.
\end{itemize}

Let $L(G)$ be the set of edges of a graph $G$. We let $A\equiv A^G=\big(a_{ij}\big)_{i,j}$ denote the adjacency
matrix of the graph $G$, i.e.,

$$a_{ij}=a_{ij}^G=%
\begin{cases} 1, \ \ \mbox{if} \ \ \{i,j\}\in L(G), \\
0, \ \ \mbox{if} \ \ \{i,j\}\notin L(G).
\end{cases}
$$

The following theorem states a condition on $z_x$ that guarantees
consistency of the measure $\mu^{(n)}$.

\begin{thm}\label{thm1} \cite{Ro}
Probability measures
$\mu^{(n)}$, $n=1,2,\ldots$, given by the formula (\ref{e1}),
are consistent if and only if for any $x\in V$  the following equation holds:
\begin{equation}\label{e3}
\begin{cases}
z'_{1,x}=\lambda \prod_{y\in S(x)}{a_{10}+a_{11}z'_{1,y}+a_{12}z'_{2,y}\over a_{00}+a_{01}z'_{1,y}+a_{02}z'_{2,y}},\\
z'_{2,x}=\lambda \prod_{y\in S(x)}{a_{20}+ a_{21}z'_{1,y}+a_{22}z'_{2,y}\over a_{00}+a_{01}z'_{1,y}+a_{02}z'_{2,y}},
\end{cases}
\end{equation}
where $z'_{i,x}=\lambda z_{i,x}/z_{0,x}, \ \ i=1,2$.
\end{thm}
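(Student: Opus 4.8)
\noindent\emph{Proof idea.} The plan is to prove both implications by substituting the explicit formula $(\ref{e1})$ into the consistency condition $(\ref{e2})$ and using that $\Im^k$ has no cycles: for an admissible $\sigma_{n-1}\in\Omega^{a,G}_{V_{n-1}}$, the concatenation $\sigma_{n-1}\vee\omega_n$ is $G$-admissible in $V_n$ exactly when $a_{\sigma_{n-1}(x),\omega_n(y)}=1$ for all $x\in W_{n-1}$ and all $y\in S(x)$, a constraint tying $\omega_n(y)$ only to the value of $\sigma_{n-1}$ at the parent of $y$. (If $\sigma_{n-1}$ is not admissible, both sides of $(\ref{e2})$ vanish, so only admissible $\sigma_{n-1}$ matter.)

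First I would rewrite the left-hand side of $(\ref{e2})$. Since $W_n$ is the disjoint union of the sets $S(x)$, $x\in W_{n-1}$, since $\#(\sigma_{n-1}\vee\omega_n)=\#\sigma_{n-1}+\sum_{y\in W_n}\mathbf{1}(\omega_n(y)\ge 1)$, and since (with $\lambda_0=1$, $\lambda_1=\lambda_2=\lambda$) each boundary weight $\lambda^{\mathbf{1}(j\ge 1)}z_{j,y}$ involves only the single vertex $y$, the sum over $\omega_n$ factorizes:
\[
\sum_{\omega_n\in\Omega_{W_n}}\mu^{(n)}(\sigma_{n-1}\vee\omega_n)\,\mathbf{1}(\sigma_{n-1}\vee\omega_n\in\Omega^{a,G}_{V_n})=\frac{\lambda^{\#\sigma_{n-1}}}{Z_n}\prod_{x\in W_{n-1}}\ \prod_{y\in S(x)}F_{\sigma_{n-1}(x)}(y),
\]
where $F_i(y):=a_{i0}z_{0,y}+a_{i1}\lambda z_{1,y}+a_{i2}\lambda z_{2,y}$. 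Comparing with $\mu^{(n-1)}(\sigma_{n-1})=Z_{n-1}^{-1}\lambda^{\#\sigma_{n-1}}\prod_{x\in W_{n-1}}z_{\sigma_{n-1}(x),x}$, the condition $(\ref{e2})$ is equivalent to the scalar identity
\[
\frac{Z_{n-1}}{Z_n}\ \prod_{x\in W_{n-1}}\ \prod_{y\in S(x)}F_{\sigma_{n-1}(x)}(y)=\prod_{x\in W_{n-1}}z_{\sigma_{n-1}(x),x}\qquad\text{for all admissible }\sigma_{n-1},
\]
which I denote by $(\star)$.

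For necessity, I would fix $x_0\in W_{n-1}$ and take the quotient of two instances of $(\star)$: the one for the all-vacant configuration and the one for the admissible configuration taking the value $i\in\{1,2\}$ at $x_0$ and $0$ elsewhere. All factors cancel except those over $S(x_0)$, leaving $\prod_{y\in S(x_0)}F_i(y)/F_0(y)=z_{i,x_0}/z_{0,x_0}$. Dividing numerator and denominator of each factor by $z_{0,y}$ turns $F_i(y)/F_0(y)$ into $(a_{i0}+a_{i1}z'_{1,y}+a_{i2}z'_{2,y})/(a_{00}+a_{01}z'_{1,y}+a_{02}z'_{2,y})$, and multiplying by $\lambda$ gives exactly $(\ref{e3})$ at $x_0$; the root $x^0$ is handled the same way with the pair $(\mu^{(1)},\mu^{(0)})$, its single free vertex being subject to no admissibility restriction. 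For sufficiency, note that $(\ref{e3})$ is equivalent to the system $\prod_{y\in S(x)}F_i(y)=c_x z_{i,x}$ $(i=0,1,2)$ with $c_x:=z_{0,x}^{-1}\prod_{y\in S(x)}F_0(y)$; inserting it into the factorized left-hand side above shows that the left side of $(\ref{e2})$ equals $\big(\prod_{x\in W_{n-1}}c_x\big)Z_n^{-1}\lambda^{\#\sigma_{n-1}}\prod_{x\in W_{n-1}}z_{\sigma_{n-1}(x),x}$, and summing this over all admissible $\sigma_{n-1}$ — the left side summing to $\sum_{\sigma_n}\mu^{(n)}(\sigma_n)=1$, the right side to $(\prod_x c_x)Z_{n-1}/Z_n$ — forces $\prod_{x\in W_{n-1}}c_x=Z_n/Z_{n-1}$, whence that right side coincides with $\mu^{(n-1)}(\sigma_{n-1})$ and $(\ref{e2})$ holds.

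The step I expect to be the main obstacle is the decoupling in the necessity direction: $(\star)$ is one scalar equation per admissible boundary configuration, and to read off a relation at an individual vertex one needs the admissible configurations on $V_{n-1}$ to be rich enough — concretely, that the value at a prescribed boundary vertex can be switched between $0$ and $1$ and between $0$ and $2$ without destroying admissibility. This is where the structure of the fertile graphs is used: in each of the four graphs the state $0$ carries a loop and is joined to both $1$ and $2$, so the mostly-vacant configurations used above are admissible. Everything else is bookkeeping made routine by the tree-induced factorization.
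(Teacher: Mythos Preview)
The paper does not prove this theorem; it is quoted from \cite{Ro} with no argument supplied, so there is nothing in-paper to compare against. I therefore assess your proposal on its own merits.

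Your factorization of the boundary sum and the reduction to the scalar identity $(\star)$ are correct, and the sufficiency direction is fine as written. The gap is in the necessity direction, precisely at the step you yourself flag as the ``main obstacle.'' You assert that ``in each of the four graphs the state $0$ carries a loop and is joined to both $1$ and $2$,'' but this is false for the \textit{wand} graph, whose edge set is $\{0,1\}\{0,2\}\{1,1\}\{2,2\}$: there is no $\{0,0\}$ edge, so $a_{00}=0$ and the all-vacant configuration on $V_{n-1}$ is \emph{not} $G$-admissible. Your two comparison configurations (all-$0$ versus all-$0$ with a single $i$ at $x_0$) therefore do not exist when $G=\textit{wand}$ --- one of the two graphs this paper is about --- and the quotient argument collapses exactly there.

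The repair is straightforward once the error is noticed. For \textit{wand}, states $1$ and $2$ \emph{do} carry loops, so the constant-$1$ configuration on $V_{n-1}$ is admissible; flipping the single boundary vertex $x_0$ from $1$ to $0$ remains admissible because $\{0,1\}\in L(G)$, and the quotient of the two instances of $(\star)$ gives $\prod_{y\in S(x_0)}F_0(y)/F_1(y)=z_{0,x_0}/z_{1,x_0}$. Repeating with the constant-$2$ configuration yields $\prod_{y\in S(x_0)}F_0(y)/F_2(y)=z_{0,x_0}/z_{2,x_0}$. Inverting these two identities and multiplying by $\lambda$ recovers $(\ref{e3})$ at $x_0$. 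For \textit{hinge} your original comparison already works since $a_{00}=1$ there.
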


Let $G_k$ be a free product of $k+1$ cyclic groups $\{e,a_i\}$ of order two with the respective generators $a_1,a_2,...,a_{k+1}, a_i^2=e$.
There is a one-to-one correspondence between the set of vertices $V$ of the Cayley
tree of order $k$ and the group $G_k$ (see. \cite{11}, \cite{GR2003}, \cite{1}).

Let $\widehat{G}_k$ be a normal divisor of a finite index $r\geq 1$ and $G_k/\widehat{G}_k=\{H_1,...,H_r\}$ be the quotient group.

\begin{defn}
A collection of quantities $z=\{z_x,x\in G_k\}$ is said to be $\widehat{G}_k$-periodic if $z_{yx}=z_x$ for $\forall x\in G_k, y\in\widehat{G}_k.$ The $G_k$-periodic collections are called translation invariant.
\end{defn}

%

\begin{defn}
 A measure $\mu$ is called $\widehat{G}_k$-periodic if it  corresponds to a $\widehat{G}_k$-periodic collection of quantities $z$.
\end{defn}

\textbf{History of the study of SGMs for the \emph{HC}-model.} We present a brief overview of the work related to the \emph{HC}-model on the Cayley tree.

In \cite{Maz} A. Mazel and Yu. Suhov introduced and studied the \emph{HC}-model on the $d$-dimensional lattice $\mathbb Z^d$. In \cite{SR} this model with two states is considered  on the Cayley tree (see also the references in \cite{RKhM} for more details).
In the present paper, we study this model with three states on the Cayley tree.

In \cite{bw}, fertile three-state \emph{HC}-models are identified that correspond to graphs of the hinge, pipe, wand, and wrench types. The works \cite{XR1}-\cite{MRS}, \cite{Ro} are devoted to the study of Gibbs measures for \emph{HC}-models with three states on the Cayley tree.
In particular, in cases $G=\textit{wand}$ and $G=\textit{hinge}$ the following facts are known :
\begin{itemize}
\item[$\bullet$] In cases $G=\textit{wand}$ and $G=\textit{hinge}$ a complete description of the TISGM is obtained on the Cayley tree of order two (resp. three) (see \cite{Ro}, resp. \cite{XR1}).

\item[$\bullet$] On the Cayley tree of order $k>3$ for $\lambda\leq\lambda_{cr}$, there is a unique TISGM and for $\lambda>\lambda_{cr}$ there are at least three TISGMs where $\lambda_{cr}={1\over k-1}\cdot\left({2\over k}\right)^k$ in the case $G=\textit{wand}$ and  $\lambda_{cr}={1\over k-1}\cdot\left({k+1\over k}\right)^k$ in the case $G=\textit{hinge}$ (see \cite{RKh1}).

\item[$\bullet$] In cases $G=\textit{wand}$ and $G=\textit{hinge}$ for $k=2$ the regions of extremes and non-extremes of TISGMs are found (see \cite{RKh1}).

\item[$\bullet$] In cases $G=\textit{wand}$ and $G=\textit{hinge}$ for $k=3$ the regions where the unique TISGM is (not) extremal and the conditions under which the extremal measure is not unique are found (see \cite{KhU}).
%

\item[$\bullet$] In the case $G=\textit{wand}$ for $k\geq2$ conditions under which there are $G^{(2)}_k$-periodic splitting Gibbs measures (non translation-invariant) are found (see \cite{KhU1}).

\item[$\bullet$] In the case $G=\textit{wand}$ for $k=2$ the regions of of extremes and non-extremes of $G^{(2)}_k$-periodic splitting Gibbs measures are found (see \cite{KhU1}).

\item[$\bullet$] In the case $G=\textit{hinge}$ the non-uniqueness of periodic Gibbs measures under certain conditions is proved on a Cayley tree of order $k\geq6$ (see \cite{Ro}).

\item[$\bullet$] In the case $G=\textit{hinge}$ for $k\geq2$ translation-invariance conditions of $G^{(2)}_k$-periodic splitting Gibbs measure are found (see \cite{KhU2}).

\end{itemize}

In this paper, we study the \emph{HC}-model with three states on the Cayley tree in the cases of fertile graphs of the "wand"\, and "hinge"\, types. In these cases, the Conjectures from \cite{RKh1} are proven, i.e. a complete description of the TIGM is obtained for a HC model with three states on a Cayley tree of arbitrary order. In addition, the concept of an alternative Gibbs measure is introduced. Translational invariance conditions for alternative Gibbs measures are found. Moreover, the existence of alternative Gibbs measures that are not translation invariant is proven.

\section{Translation-invariant Gibbs measures}\label{sec2}

In (\ref{e3}) we assume that $z_{0,x} \equiv1$ è $z_{i,x}=z'_{i,x}>0, \  \  i=1,2.$ Then by Theorem 1 there is a unique $G$-\emph{HC}-Gibbs measure $\mu$ if and only if the equality holds for any function $z:x\in V \longmapsto z_{x}=(z_{1,x},z_{2,x})$:
\begin{equation}\label{e4}
z_{i,x}=\lambda\prod_{y\in S(x)}\frac{a_{i0}+a_{i1}z_{1,y}+a_{i2}z_{2,y}}{a_{00}+a_{01}z_{1,y}+a_{02}z_{2,y}},   \ \  i=1,2.
\end{equation}

We consider fertile graphs $G=\textit{wand}$ and $G=\textit{hinge}$ (see Fig. 1.):
$$
\begin{array}{ll}
\mbox{\it wand}: &  \{0,1\}\{0,2\}\{1,1\}\{2,2\}\\
\mbox{\it hinge}: &  \{0,0\}\{0,1\}\{0,2\}\{1,1\}\{2,2\}.\\
\end{array} $$

\begin{figure}[h]\label{fig1}
\center{\includegraphics[scale=0.5]{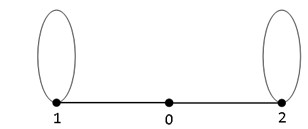} \qquad \includegraphics[scale=0.5]{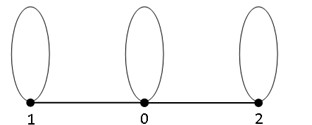}}
\caption{Fertile graphs $G=\textit{wand}$ (on the left) and $G=\textit{hinge}$ (on the right).}
\end{figure}

In \cite{RKh1}, the conjecture that in the case $G=\textit{wand}$ for any $k\geq4$ and $\lambda>\lambda_{cr}$ there are only three TISGMs is stated (see \cite{RKh1}, Conjecture 2). We will prove this conjecture.

We consider translation invariant solutions with $z_x=z\in R^2_+$, $x\neq x_0$.

In \cite{Ro}, for TISGM in the case of $G=\textit{wand}$ the following system of equations is considered:
\begin{equation}\label{e5}
\begin{cases}
    z_{1}=\lambda\left(\frac{1+z_{1}}{z_{1}+z_{2}}\right)^k, \\
    z_{2}=\lambda\left(\frac{1+z_{2}}{z_{1}+z_{2}}\right)^k.
\end{cases}
\end{equation}

In the case $z_{1}=z_{2}=z$ from the system of equations (\ref{e5}) we obtain
\begin{equation}\label{e6}
z=\lambda\Big(\frac{1+z}{2z}\Big)^k.
\end{equation}
The following lemma is known.

\begin{lemma} \cite{Ro} 
 Let $k\geq2$. Then for any $\lambda>0$ the equation (\ref{e10}) has a unique positive solution.
\end{lemma}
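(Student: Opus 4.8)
The plan is to reduce the equation in the statement, namely $z=\lambda\left(\frac{1+z}{2z}\right)^k$ (equation (\ref{e6})), to a scalar monotonicity statement. Clearing denominators, it is equivalent to $z^{k+1}=\frac{\lambda}{2^k}(1+z)^k$, i.e.
\[
g(z):=\frac{z^{k+1}}{(1+z)^k}=\frac{\lambda}{2^k}.
\]
Since both sides are strictly positive on $(0,\infty)$, this rearrangement neither loses nor creates solutions there, so it suffices to show that $g\colon(0,\infty)\to(0,\infty)$ is a continuous, strictly increasing bijection; then the positive level $\lambda/2^k$ is attained exactly once, which is the assertion.

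First I would compute the logarithmic derivative of $g$. From $\ln g(z)=(k+1)\ln z-k\ln(1+z)$ one gets
\[
\frac{g'(z)}{g(z)}=\frac{k+1}{z}-\frac{k}{1+z}=\frac{z+k+1}{z(1+z)}>0\qquad(z>0),
\]
hence $g'(z)>0$ for every $z>0$ and $g$ is strictly increasing on $(0,\infty)$. Note that this holds for every $k\ge 0$, so the hypothesis $k\ge 2$ is not actually needed for this particular lemma.

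Next I would record the boundary behaviour: $g(z)\to 0$ as $z\to 0^+$, and writing $g(z)=z\,(1+1/z)^{-k}\sim z$ shows $g(z)\to\infty$ as $z\to\infty$. Continuity of $g$ on $(0,\infty)$, strict monotonicity, and these two limits imply, via the intermediate value theorem, that $g$ maps $(0,\infty)$ bijectively onto $(0,\infty)$. Therefore $g(z)=\lambda/2^k$ has a unique positive solution for every $\lambda>0$, and this is the unique positive solution of (\ref{e6}).

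I do not expect a genuine obstacle: the argument is a one-line derivative computation followed by two elementary limits. The only point deserving a word of care is the passage between the fixed-point form $z=\lambda\left(\frac{1+z}{2z}\right)^k$ and the separated form $g(z)=\lambda/2^k$, where one should confirm that the equivalence is exact on $(0,\infty)$ — it is, precisely because every factor cancelled in the rearrangement ($z^k$, $2^k$) is nonzero there.
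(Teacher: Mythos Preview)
Your argument is correct. The equivalence $z=\lambda\bigl(\tfrac{1+z}{2z}\bigr)^k \Longleftrightarrow g(z):=\dfrac{z^{k+1}}{(1+z)^k}=\dfrac{\lambda}{2^k}$ holds on $(0,\infty)$, the logarithmic derivative computation $\dfrac{g'(z)}{g(z)}=\dfrac{z+k+1}{z(1+z)}>0$ is clean, and the limits $g(0^+)=0$, $g(\infty)=\infty$ finish the job via the intermediate value theorem. Your side remark that the argument goes through for all $k\geq 0$ (not just $k\geq 2$) is also correct.

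As for comparison: the paper does not supply its own proof of this lemma --- it simply quotes the result from \cite{Ro} (Rozikov--Shoyusupov). So there is no in-paper argument to compare against. Your monotonicity-plus-limits approach is the standard and natural one for such equations, and it is almost certainly what appears in \cite{Ro} as well. One cosmetic point: the statement as printed references equation~(\ref{e10}), which is a labeling slip; you correctly identified that the intended equation is~(\ref{e6}).
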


Now we prove Conjecture 2 given in \cite{RKh1}, i.e. the following theorem is true.

\begin{thm}\label{thm2}
Let $k\geq2$ and $\lambda_{cr}^{(1)}(k)=\frac{2^{k}}{(k-1)k^k}$. Then for the \emph{HC}-model in the case of $G=\textit{wand}$ for $0<\lambda\leq\lambda_{cr}^{(1)}$ there is exactly one  TISGM $\mu_0$, and for $\lambda>\lambda_{cr}^{(1)}$ there are exactly three TISGMs $\mu_0$, $\mu_1$ and $\mu_2$.
\end{thm}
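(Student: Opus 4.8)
The plan is to analyze the system \eqref{e5} by separating it into the symmetric solutions (where $z_1=z_2$) and the asymmetric ones. For the symmetric case, set $z_1=z_2=z$; then both equations of \eqref{e5} collapse to the single scalar equation \eqref{e6}, namely $z=\lambda\bigl(\tfrac{1+z}{2z}\bigr)^k$. By Lemma~2 (the quoted uniqueness result) this has exactly one positive root $z^*$ for every $\lambda>0$, which produces the measure $\mu_0$. So the first step is to record that $\mu_0$ always exists and is the unique symmetric TISGM.

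The substance of the theorem is the count of asymmetric solutions. First I would subtract the two equations of \eqref{e5}. Writing $f(t)=\bigl(\tfrac{1+t}{z_1+z_2}\bigr)^k$, the system is $z_i=\lambda f(z_i)$, and subtracting gives $z_1-z_2=\lambda\bigl((1+z_1)^k-(1+z_2)^k\bigr)/(z_1+z_2)^k$. Since the factor $(1+z_1)^k-(1+z_2)^k$ has the same sign as $z_1-z_2$ and cannot vanish unless $z_1=z_2$, in the asymmetric regime we may divide by $z_1-z_2$. The natural move is to introduce the new variables $s=z_1+z_2$ and $p=z_1 z_2$ (or, equivalently, to parametrize the two coordinates as the two roots of a quadratic). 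Adding the equations gives $s=\lambda\bigl((1+z_1)^k+(1+z_2)^k\bigr)/s^k$, and the division above gives a second relation; expressing $(1+z_1)^k\pm(1+z_2)^k$ through $s$ and $p$ via the binomial theorem, one reduces the whole problem to a single equation in one real variable. The cleanest route is to note that if $(z_1,z_2)$ is an asymmetric solution then so is $(z_2,z_1)$, so asymmetric solutions come in pairs and give exactly two measures $\mu_1,\mu_2$; hence it suffices to show that the reduced one-variable equation has exactly one admissible solution precisely when $\lambda>\lambda_{cr}^{(1)}(k)$ and none otherwise.

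To carry out that last reduction, I would use the substitution that on an asymmetric solution each $z_i$ satisfies $z_i(z_1+z_2)^k=\lambda(1+z_i)^k$, i.e. $z_i$ is a root of $x\,s^k=\lambda(1+x)^k$ where $s$ is the (common) value $z_1+z_2$; rewriting, $x/(1+x)^k=\lambda/s^k$ is constant, so $z_1$ and $z_2$ are the two distinct roots of the equation $g(x):=x/(1+x)^k=c$ for the constant $c=\lambda/s^k$. The function $g$ on $(0,\infty)$ increases from $0$ to its maximum at $x=\tfrac1{k-1}$ (value $g_{\max}=\tfrac{(k-1)^{k-1}}{k^k}$) and then decreases back to $0$, so for each admissible $c\in(0,g_{\max})$ there are exactly two roots $x_-(c)<\tfrac1{k-1}<x_+(c)$. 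The consistency requirement is then the single equation $x_-(c)+x_+(c)=s=(\lambda/c)^{1/k}$, which after clearing denominators becomes one equation relating $c$ (equivalently $s$) to $\lambda$. Monotonicity analysis of this last equation — showing the relevant function is monotone in the free variable and that a solution in the admissible range exists iff $\lambda$ exceeds the threshold — yields the value $\lambda_{cr}^{(1)}(k)=\tfrac{2^k}{(k-1)k^k}$ (which is exactly $1/((k-1))\cdot(2/k)^k$, matching the constant appearing in \eqref{e6} at the critical configuration $z_1+z_2$ corresponding to the double root).

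The main obstacle I anticipate is precisely this final one-variable monotonicity argument: one must prove that the reduced equation has at most one solution in the admissible interval (uniqueness of the asymmetric pair) and identify the exact bifurcation value $\lambda_{cr}^{(1)}(k)$. Showing "at least one solution for $\lambda>\lambda_{cr}^{(1)}$" is an intermediate-value argument once the endpoints' behavior is understood, but ruling out extra asymmetric solutions requires a genuine convexity/derivative estimate on $g$ and on the map $c\mapsto x_-(c)+x_+(c)$, and this is where the computation is heaviest; the symmetry reduction and the passage to the single variable $c$ are the conceptual steps that make it tractable. I would also need to check the degenerate boundary case $\lambda=\lambda_{cr}^{(1)}$ separately to confirm that there the asymmetric roots merge into the symmetric one, so that the count "exactly one" holds at the threshold itself.
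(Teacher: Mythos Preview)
Your outline is sound up to the reduction, but the parametrization you choose is genuinely different from the paper's and leaves the decisive step implicit. You parametrize asymmetric solutions by the level value $c=\lambda/s^k$ of $g(x)=x/(1+x)^k$, so that $z_1,z_2$ are the two branches $x_\pm(c)$, and the consistency condition becomes $\lambda=c\,\bigl(x_-(c)+x_+(c)\bigr)^k$. The paper instead sets $t=(1+z_1)/(1+z_2)$, which (after dividing the two equations of \eqref{e5}) forces $z_2(t)=\bigl(\sum_{i=1}^{k-1}t^i\bigr)^{-1}$ and $z_1(t)=t^k z_2(t)$, and substituting back yields the \emph{explicit} rational function
\[
\lambda(t)=\frac{(t^k+1)^k}{\bigl(\sum_{i=1}^{k-1}t^i\bigr)\bigl(\sum_{i=0}^{k-1}t^i\bigr)^k}.
\]
The two parametrizations are in bijection (each $t>1$ corresponds to one $c\in(0,g_{\max})$), so your $\Lambda(c)=c\,(x_-(c)+x_+(c))^k$ and the paper's $\lambda(t)$ describe the same curve; the difference is that the paper's formula is explicit, while yours involves the implicitly defined roots $x_\pm(c)$.

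This matters precisely at the step you flag as the obstacle. The paper proves monotonicity of $\lambda(t)$ on each side of $t=1$ by computing $\lambda'(t)$ and reducing the sign question to showing that a certain polynomial $\mathcal{R}(t,k)$ is nonnegative; they do this by checking that $t=1$ is a root of multiplicity four and that, after the shift $t=1+x$, all remaining coefficients are positive (via a binomial-coefficient inequality $C_{2k}^i\ge 2kC_k^{i-1}$). Your route would instead require implicit differentiation of $x_\pm(c)$ and an estimate on $x_-'(c)+x_+'(c)$; this is doable in principle but you have not supplied the argument, and it is not obviously easier than the paper's polynomial computation. In short: your reduction is correct and conceptually clean, but the paper's $t$-substitution buys an explicit handle on the hard monotonicity step that your level-set parametrization lacks.
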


\begin{proof}
We prove that there exists $\lambda_{cr}^{(1)}$ such that the system of equations (\ref{e5}) for $\lambda\leq\lambda_{cr}^{(1)}$ has a unique positive solution, and for $\lambda>\lambda_{cr}^{(1)}$ has exactly three solutions and we find the explicit form of $\lambda_{cr}^{(1)}$. For convenience, we denote $z_{1}=x$ and $z_{2}=y$.

From (\ref{e5}) we get
\begin{equation}\label{e7}
\frac{x}{y}=\left(\frac{1+x}{1+y}\right)^k.
\end{equation}
We introduce the notation $\frac{1+x}{1+y}=t$, $(t>0)$.
Then by virtue of (\ref{e7}) ($x=y\cdot t^k$), after some algebras we obtain
$$\left(t-1\right)\left(y\cdot(t^{k-1}+t^{k-2}+\dots+t) -1\right)=0.$$
Hence $t=1$ or
$$y\cdot(t^{k-1}+t^{k-2}+\dots+t) -1=0.$$
It is clear that for $t=1$ we have the solution $x=y=x^*$. By Lemma 1 it follows that for any $\lambda>0$
 a solution of this kind is unique. Then
$$y(t)=\frac{1}{t^{k-1}+t^{k-2}+\dots+t}$$
and the function $y(t)$ is uniquely determined for each value of $t$, since $y'(t)<0.$
The values of $x(t)$ corresponding to each value of $y(t)$ are determined by the formula
$$x(t)= t^k\cdot y(t)=\frac{t^k}{t^{k-1}+t^{k-2}+\dots+t}.$$
We substitute the expressions for $x(t)=y(t)\cdot t^k$ and $y(t)$ into the first equation $(\ref{e5})$. Then
\begin{equation}\label{e8}
\frac{t^k}{t^{k-1}+t^{k-2}+\dots+t}=\lambda\left(t^k+t^{k-1}+t^{k-2}+\dots+t \over t^k+1\right)^k.
\end{equation}
Solving the equation $(\ref{e8})$ with respect to the variable $t=t(\lambda,k)$ seems to be difficult. Therefore, we consider the equation (\ref{e8}) with respect to the variable $\lambda$. From (\ref{e8}) we find
\begin{equation}\label{e9}
\lambda(t)=\frac{(t^k+1)^k}{(t^{k-1}+t^{k-2}+\dots+t)(t^{k-1}+t^{k-2}+\dots+1)^k}=
\frac{(t^k+1)^k}{\Big(\sum_{i=1}^{k-1}t^i\Big)\Big(\sum_{i=0}^{k-1}t^i\Big)^k}.
\end{equation}
Let us prove that each value $\lambda$ corresponds to only one value $t$. Note that if $t$ is a solution of (\ref{e8}), then $\frac{1}{t}$ is also a solution to (\ref{e8}). Hence, it suffices to show that each value of $\lambda$ corresponds to exactly one value $t>1$ (or $t<1$).
To do this, consider the derivative of the function $\lambda(t)$:
$$\lambda'(t)=\frac{(t^k+1)^{k-1}\cdot\mathcal{R}(t,k)}{t^2(t^k-1)(t^{k-1}-1)^2\cdot(t^{k-1}+t^{k-2}+\dots+t+1)^k},$$
where
$$\mathcal{R}(t,k)=(t^k-1)(t^{2k}-1)+kt(t^{k-2}-1)(t^{2k}-1)-2k^2t^k(t-1)(t^{k-1}-1).$$
It can be seen that $t=1$ is the double root of the polynomial $\mathcal{R}(t,k)$. We show that $t=1$ is four-fold root of $\mathcal{R}(t,k)$. To do this, we introduce the notation $t-1=x$ ($t=x+1$) and prove that $ x=0$ is four-fold root of $\mathcal{R}_1(x,k)=\mathcal{R}_1(t-1,k)$, i.e. in the polynomial $\mathcal{R}_1(x,k)$ the smallest degree of the variable $x$ is four. We rewrite $\mathcal{R}_1(x,k)$ as follows:
$$\mathcal{R}_1(x,k)=\left(\sum_{i=1}^{2k}{C_{2k}^ix^i}\right)\left(\sum_{i=1}^{k}{C_{k}^ix^i}+k(x+1)\sum_{i=1}^{k-2}{C_{k-2}^ix^i}\right)
-2k^2\left(\sum_{i=0}^{k}{C_{k}^ix^{i+1}}\right)\left(\sum_{i=1}^{k-1}{C_{k-1}^ix^i}\right).$$
It is easy to show that the coefficients at $x^2$ and $x^3$ in the polynomial $\mathcal{R}_1(x,k)$ are equal to zero, and at $x^4$ they are non-zero.

Now we show that $\mathcal{R}_1(x,k)\geq0$, i.e.
\begin{equation}\label{e10}
\left(\sum_{i=1}^{2k}{C_{2k}^ix^i}\right)\left(\sum_{i=1}^{k}{C_{k}^ix^i}+k(x+1)\sum_{i=1}^{k-2}{C_{k-2}^ix^i}\right)
>\left(2k\sum_{i=0}^{k}{C_{k}^ix^{i+1}}\right)\left(k\sum_{i=1}^{k-1}{C_{k-1}^ix^i}\right).
\end{equation}
Firstly, we show that in (\ref{e10}) the expression in the second factor of the LHS is greater than the expression in the second factor of the  RHS, i.e. validity of the inequality
$$\sum_{i=1}^{k}{C_{k}^ix^i}+k(x+1)\sum_{i=1}^{k-2}{C_{k-2}^ix^i}>k\sum_{i=1}^{k-1}{C_{k-1}^ix^i}.$$
Indeed,
$$\sum_{i=1}^{k}{C_{k}^ix^i}+k(x+1)\sum_{i=1}^{k-2}{C_{k-2}^ix^i}=\sum_{i=1}^{k}{C_{k}^ix^i}+k\sum_{i=1}^{k-2}{C_{k-2}^ix^i}+k\sum_{i=2}^{k-1}{C_{k-2}^{i-1}x^i}=$$
$$=\sum_{i=2}^{k}{C_{k}^ix^i}+k\Big((k-1)x+\sum_{i=2}^{k-1}C_{k-1}^ix^i\Big)=\sum_{i=2}^{k}{C_{k}^ix^i}+k\sum_{i=1}^{k-1}C_{k-1}^ix^i>k\sum_{i=1}^{k-1}C_{k-1}^ix^i.$$

Now, we show that in the inequality (\ref{e10}) the expression in the first factor of the LHS is greater than the expression in the first factor of the RHS, i.e. the validity of the following inequality:
$$\sum_{i=1}^{2k}{C_{2k}^ix^i}\geq 2k\sum_{i=0}^{k}{C_{k}^ix^{i+1}} = 2k\sum_{i=1}^{k+1}{C_{k}^{i-1}x^i}.$$ To do this, we prove that $C_{2k}^i\geq 2k C_{k}^{i-1}$. Indeed, we rewrite the last inequality as

\begin{equation}\label{e11}
(2k-1)!\cdot (k+1-i)! \geq i\cdot k!\cdot(2k-i)!.
\end{equation}
Using the mathematical induction, we show the validity of the inequality (\ref{e11}) for $i\geq3$. It is easy to show the validity of (\ref{e11}) for $i=3$.
Assume that inequality (\ref{e11}) is true for $i=p$.
Let us prove that (\ref{e11}) is also true for $i=p+1$:
$$(2k-1)!\cdot (k-p)! \geq (p+1)\cdot k!\cdot(2k-p-1)!$$
Using the inequality for $i=p$, we get
$$(p+1)\cdot k!\cdot(2k-p-1)!\leq\frac{(p+1)(2k-1)!\cdot (k+1-p)!}{p (2k-p)}.$$
Hence, it suffices to show that
$$\frac{(p+1)(2k-1)!\cdot (k+1-p)!}{p(2k-p)}\leq(2k-1)!\cdot (k-p)!. $$
From the last inequality, after some algebras, we obtain $k(p-1)\geq0.$
So inequality (\ref{e10}) is true.
This means that all coefficients of $x^n$, $n\geq4$, in the polynomial $\mathcal{R}_1(x,k)$ are positive.
It follows that
$$\mathcal{R}(t,k)=(t-1)^4\cdot\mathcal{L}(t),$$
and $\mathcal{L}(t)>0$ for $t>0$. It means that
$$\lambda'(t)=\frac{(t-1)^4(t^k+1)^{k-1}\cdot\mathcal{L}(t)}
{t^2(t^k-1)(t^{k-1}-1)^2\cdot(t^{k-1}+t^{k-2}+\dots+t+1) ^k}.$$
Therefore, the function $\lambda(t)$ decreases for $t<1$, increases for $t>1$ and reaches its minimum for $t=1$ (see Fig. \ref{fig2}):
$$\lambda_{\min}(t)=\lambda(1)=\lambda_{cr}^{(1)}(k)=\frac{1}{k-1}\left(\frac{2 }{k}\right)^{k}.$$

\begin{figure}[h]
 \center{\includegraphics[width=5cm]{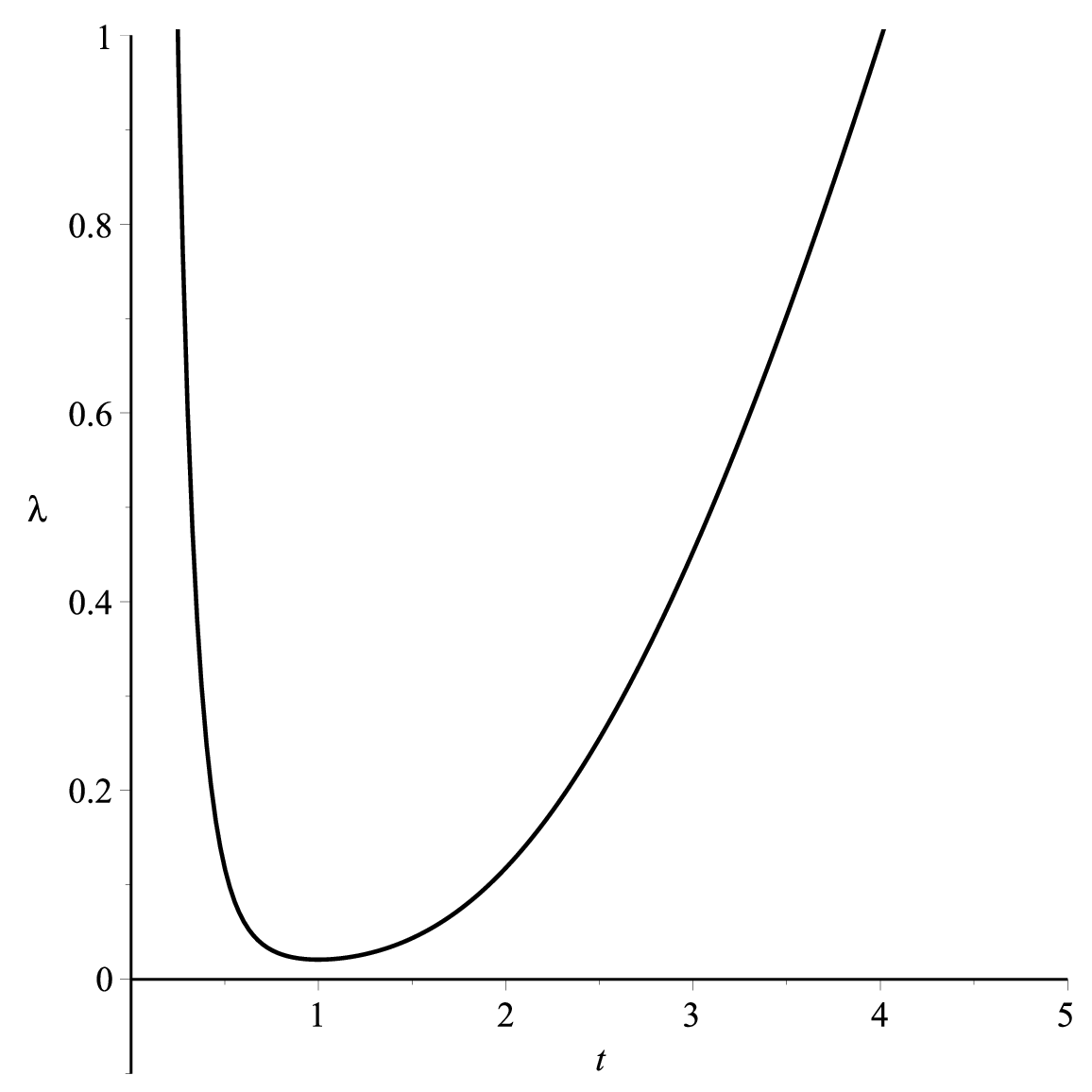}}
\label{fig2}
\caption{Graph of the function $\lambda(t)$ for $k=4$.}
\end{figure}

Hence, each value $\lambda$ corresponds to only one value $t>1$ (or $t<1$) for $\lambda>\lambda_{cr}^{(1)}$, the value $t=1$ for $\lambda=\lambda_{cr}^{(1)}$ and equation (\ref{e5}) has no solutions for $\lambda<\lambda_{cr}^{(1)}$. Due to symmetry, the system of equations (\ref{e5}) for $\lambda\leq\lambda_{cr}^{(1)}$ has a unique solution of the form $(z^*,z^*)$, and for $\lambda>\lambda_{cr}^{(1)}$ has exactly three solutions $(z^*,z^*)$, $(z_{1},z_{2})$ and $(z_{ 2},z_{1})$, $z_{1}\neq z_{2}$, where $z^*$ is the unique positive solution of (\ref{e5}). Denote the TISGMs corresponding to the solutions $(z^*,z^*)$, $(z_{1},z_{2})$ and $(z_{2},z_{1})$ by $\mu_0$, $\mu_1$ and $\mu_2$, respectively..
\end{proof}

Conjecture 1 given in \cite{RKh1} is also true, i.e. the following theorem is true.
\begin{thm}
Let $k\geq2$ and $\lambda_{cr}(k)=\frac{(k+1)^{k}}{(k-1)k^k}$. Then for the HC model in the case of $G=\textit{Hinge}$ for $0<\lambda\leq\lambda_{cr}$ there is exactly one TISGMs $\nu_0$, and for $\lambda>\lambda_{cr}$ there are exactly three TISGMs $\nu_0$, $\nu_1$ and $\nu_2$.
\end{thm}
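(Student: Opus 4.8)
The plan is to repeat, essentially verbatim, the argument in the proof of Theorem \ref{thm2}, replacing the wand adjacency matrix by that of the hinge (for which $a_{00}=a_{01}=a_{02}=a_{11}=a_{22}=1$, $a_{12}=0$). Putting $z_1=x$, $z_2=y$ in (\ref{e4}) one gets the translation-invariant system
\[
x=\lambda\left(\frac{1+x}{1+x+y}\right)^{k},\qquad
y=\lambda\left(\frac{1+y}{1+x+y}\right)^{k}.
\]
Dividing the two equations produces exactly relation (\ref{e7}), so the substitution $t=\frac{1+x}{1+y}$, $x=y\,t^{k}$ leads, just as before, to $(t-1)\bigl(y\,(t^{k-1}+t^{k-2}+\dots+t)-1\bigr)=0$. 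The root $t=1$ gives the symmetric solution $x=y=z^{*}$, where $z^{*}$ is the unique positive root of $z=\lambda\bigl(\frac{1+z}{1+2z}\bigr)^{k}$ (uniqueness as in Lemma 1, since the right-hand side is strictly decreasing in $z$ while the left-hand side is strictly increasing); on the branch $t\neq1$ we again set $y(t)=\bigl(t^{k-1}+\dots+t\bigr)^{-1}$ and $x(t)=t^{k}y(t)$. The only change from the wand computation occurs when substituting into the first equation: since now the denominator is $1+x+y$ rather than $x+y$, one obtains, after simplifying with $\sum_{i=1}^{k}t^{i}=t\sum_{i=0}^{k-1}t^{i}$,
\[
\lambda(t)=\frac{t^{k}\bigl(\sum_{i=0}^{k}t^{i}\bigr)^{k}}
{\bigl(\sum_{i=1}^{k-1}t^{i}\bigr)\bigl(\sum_{i=1}^{k}t^{i}\bigr)^{k}}
=\frac{\bigl(\sum_{i=0}^{k}t^{i}\bigr)^{k}}
{\bigl(\sum_{i=1}^{k-1}t^{i}\bigr)\bigl(\sum_{i=0}^{k-1}t^{i}\bigr)^{k}},
\]
and evaluating at $t=1$ gives $\lambda(1)=\dfrac{(k+1)^{k}}{(k-1)k^{k}}=\lambda_{cr}(k)$, as claimed.

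It then remains to show, exactly as in Theorem \ref{thm2}, that $\lambda(t)$ is strictly decreasing on $(0,1)$, strictly increasing on $(1,\infty)$, with minimum $\lambda_{cr}(k)$ at $t=1$. Because $t$ and $1/t$ are simultaneously (non)solutions of $\lambda(t)=\lambda$, it suffices to treat $t>1$. Differentiating and clearing the (positive) denominator, the sign of $\lambda'(t)$ is that of a polynomial $\widetilde{\mathcal R}(t,k)$ — the hinge analogue of $\mathcal R(t,k)$; one checks that $t=1$ is a root of multiplicity four, and after the change of variable $t=1+x$ the coefficients of $x^{0},x^{1},x^{2},x^{3}$ in $\widetilde{\mathcal R}(1+x,k)$ vanish while those of $x^{n}$ with $n\geq4$ are positive. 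I expect this positivity step to be the main technical obstacle; it should be carried out just as inequalities (\ref{e10})--(\ref{e11}) were, by writing the relevant products of polynomials in the variable $x$, comparing them factor by factor, and finishing with a binomial-coefficient inequality proved by induction on the exponent — with the extra bookkeeping caused by the numerator now being $\bigl(1+t+\dots+t^{k}\bigr)^{k}$ instead of $(t^{k}+1)^{k}$.

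Granting the monotonicity of $\lambda(t)$, the count of solutions follows immediately: for $0<\lambda<\lambda_{cr}$ the equation $\lambda(t)=\lambda$ has no root, for $\lambda=\lambda_{cr}$ only $t=1$, and for $\lambda>\lambda_{cr}$ exactly one root $t>1$ together with its reciprocal $t<1$. Hence the system has the single solution $(z^{*},z^{*})$ when $\lambda\leq\lambda_{cr}$, and exactly the three solutions $(z^{*},z^{*})$, $(x,y)$, $(y,x)$ with $x\neq y$ (the last two being interchanged by the symmetry $x\leftrightarrow y$, i.e. $t\mapsto 1/t$) when $\lambda>\lambda_{cr}$. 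Denoting the TISGMs corresponding to these solutions by $\nu_{0}$, $\nu_{1}$, $\nu_{2}$ completes the proof.
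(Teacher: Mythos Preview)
Your proposal is correct and follows essentially the same approach as the paper, which in fact says only ``The proof is similar to the proof of Theorem \ref{thm2}.'' Your derivation of the hinge system, the reduction via $t=(1+x)/(1+y)$, and the explicit formula $\lambda(t)=\bigl(\sum_{i=0}^{k}t^{i}\bigr)^{k}\big/\bigl[(\sum_{i=1}^{k-1}t^{i})(\sum_{i=0}^{k-1}t^{i})^{k}\bigr]$ with $\lambda(1)=\lambda_{cr}(k)$ are exactly what the paper intends, and your outline of the monotonicity step via a fourth-order zero at $t=1$ and a binomial-coefficient inequality is already more detailed than what the paper provides.
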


\begin{proof}
The proof is similar to the proof of Theorem \ref{thm2}.
\end{proof}

\section{Alternative Gibbs measures in the case $G=\textit{Wand}$}

We consider the half-tree. Namely the root $x^0$ has $k$ nearest neighbors. We construct below new solutions of the functional equation (\ref{e4}). Consider the following matrix
$$M=\begin{pmatrix} m & k-m \\ r & k-r \end{pmatrix}$$
where  $0\leq m\leq k$ and $0\leq r\leq k$ are non-negative integers.

This matrix defines the number of times the values $z$ and $t$ occur in the set $S(x)$ for each $z_x\in \{z,t\}$. More precisely, the boundary condition
$z=\{z_x,x\in G_k\}$ with fields taking  values $z=(z_1,z_2)\in R_{+}^{2}$ and $t=(t_1,t_2)\in R_{+}^{2}$ defined by the following steps:

$\bullet$ if at vertex $x$ we have $z_x=z$, then the function $z_y$, which gives real values to each vertex $y\in S(x)$ by the following rule
$$\begin{cases}
z ~\mbox{on} ~m~ \mbox{vertices of} ~ S(x),\\
t ~\mbox{on} ~k-m ~\mbox{remaining vertices,}
\end{cases}$$

$\bullet$ if at vertex $x$ we have $z_x=t$, then the function $z_y$, which gives real values to each vertex $y\in S(x)$ by the following rule
$$\begin{cases}
t ~\mbox{on} ~r~ \mbox{vertices of} ~ S(x),\\
z ~\mbox{on} ~k-r ~\mbox{remaining vertices.}
\end{cases}$$
For an example of such a function see Fig. \ref{fig3}.

\begin{figure}[h] 
\scalebox{1}{
\begin{tikzpicture}[level distance=2.0cm,
level 1/.style={sibling distance=1.8cm},
level 2/.style={sibling distance=2.0cm},
level 3/.style={sibling distance=.2cm}]
\node {$z$} [grow'=up]
    child[sibling distance=2.9cm] {node {$t$}
        child[sibling distance=.3cm] foreach \name in {z,z,z,z,t} { node
        {$\name$} }
        }
        child[sibling distance=2.8cm] {node {$t$}
            child[sibling distance=.3cm] foreach \name in {z,z,z,z,t} { node
        {$\name$} }
        }
        child[sibling distance=2.0cm] {node {$z$}
        child[sibling distance=.3cm] foreach \name in {t,t,z,z,z} { node
        {$\name$} }
    }
    child[sibling distance=2.8cm]{node {$z$}
        child[sibling distance=.3cm] foreach \name in {t,t,z,z,z} { node
        {$\name$} }
    }
    child[sibling distance=2.9cm] {node {$z$}
        child[sibling distance=.3cm] foreach \name in {t,t,z,z,z} { node
        {$\name$} }
    }
;
\end{tikzpicture}
}
\label{fig3}
\caption{In this figure the values of function $z_x$ on the vertices of the Cayley tree
of order 5 are shown in the case $m=3$ and $r=1$.} 
\end{figure}
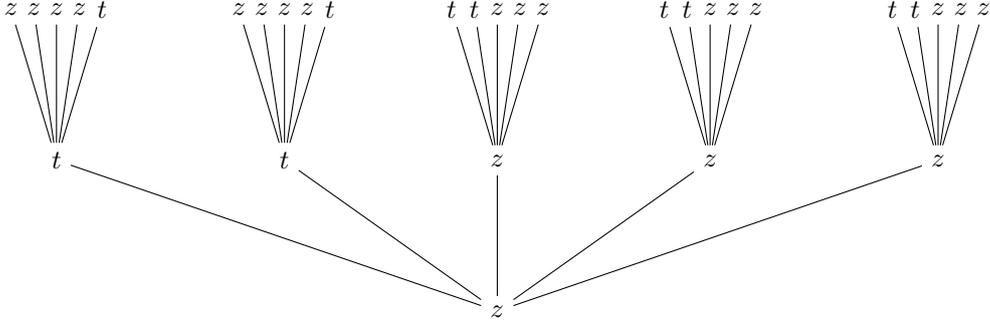

Then the system (\ref{e4}) in the case $G=\textit{wand}$ has the form
\begin{equation} \label{e12}
\begin{cases}
    z_1=\lambda\cdot \left({1+z_1\over z_1+z_2}\right)^m\cdot \left({1+t_1\over t_1+t_2}\right)^{k-m}, \\
    z_2=\lambda\cdot \left({1+z_2\over z_1+z_2}\right)^m\cdot \left({1+t_2\over t_1+t_2}\right)^{k-m}, \\
    t_1=\lambda\cdot \left({1+t_1\over t_1+t_2}\right)^r\cdot \left({1+z_1\over z_1+z_2}\right)^{k-r}, \\
    t_2=\lambda\cdot \left({1+t_2\over t_1+t_2}\right)^r\cdot \left({1+z_2\over z_1+z_2}\right)^{k-r},
\end{cases}
\end{equation}
where $z(z_1,z_2)>0, \ \  t(t_1,t_2)>0,\ \  \lambda>0.$

As was mentioned above, for any boundary condition satisfying the functional equation (\ref{e4}) there exists a unique Gibbs measure. A measure constructed in this way and which is not translation invariant (TI) is called alternative Gibbs measure (AGM) and denoted as $\mu_{m,r}$.

\begin{rk}\label{rm1}
  Note that the solution $z=t$ (i.e. $(z_{1},z_{2})=(t_1,t_2)$) in (\ref{e12}) corresponds to the unique TISGM $\mu_0$ for the \emph{HC}-model (see \cite{Ro}). Therefore, we are interested in solutions of the form $z\neq t.$
\end{rk}

\begin{rk}\label{rm2}
From (\ref{e12}) for $m=r=0$ we obtain a system of equations whose solutions correspond to the $G^{(2)}_k$-periodic splitting Gibbs measures for the \emph{HC}-model, where $G^{(2)}_k$ is a subgroup of words of even length in $G_k$ (see \cite{KhU1}).
\end{rk}

It is obvious that the following sets are invariant with respect to the operator $W:R^4\rightarrow R^4$ defined by RHS of (\ref{e12}):
%
%
%
$$I_1=\{(z_1,z_2,t_1,t_2)\in R^4:z_1=z_2=t_1=t_2\},\  \ I_2=\{(z_1,z_2,t_1,t_2)\in R^4:z_1=t_1,z_2=t_2\},\ $$
$$I_3=\{(z_1,z_2,t_1,t_2)\in R^4:m=r; z_1=t_2,z_2=t_1\},  \ \ I_4=\{(z_1,z_2,t_1,t_2)\in R^4:z_1=z_2, t_1=t_2\},$$
%

\begin{rk}\label{rm3}
In general, the analysis of a system of equations (\ref{e12}) is  a very hard. Therefore, we consider it on invariant sets $I_i, i=1,2,3,4$. Note that the mapping $W$ could have another invariant sets which are different from the sets $I_i, i=1,2,3,4$.
\end{rk}

\begin{rk}\label{rm4} In the case of $I_1$, from the system of equations (\ref{e12}) we obtain the equation (\ref{e6}), which, due to Lemma 1, has a unique positive solution for $k\geq2$ and $\lambda>0$. Moreover, this solution corresponds to the unique TISGM $\mu_{0}$. Thus, the set of values from $I_1$ corresponds to a unique TISGM $\mu_{0}$.
\end{rk}

\begin{rk}\label{rm5}
In the case of $I_2$ from (\ref{e12}) we obtain a system of equations whose solutions correspond to TISGMs is studied in the works  \cite{XR1}, \cite{RKh1} and \cite{Ro}.
\end{rk}

\subsection{Alternative Gibbs measures on $I_3$}

On $I_3$ the system of equations (\ref{e12}) has the form
\begin{equation}\label{e14}
\begin{cases}
 z_1=\lambda\cdot\left({1+z_1\over z_1+z_2}\right)^m\cdot\left({1+z_2\over z_1+z_2}\right)^{k-m},\\
 z_2=\lambda\cdot\left({1+z_2\over z_1+z_2}\right)^m\cdot\left({1+z_1\over z_1+z_2}\right)^{k-m}.
\end{cases}
\end{equation}

The following theorem is true.
\begin{thm}\label{thm4}
Let $k\geq2$ and $\lambda>0$. If $k\geq 2m-1$ then for the HC model in the case of $G=\textit{wand}$ on $I_{3}$ there is no AGM except for TI.
\end{thm}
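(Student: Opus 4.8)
The plan is to analyze the system (\ref{e14}) directly, exploiting its symmetry under the swap $z_1\leftrightarrow z_2$. First I would note that since any solution with $z_1=z_2$ lies in $I_1$ and hence corresponds to the TISGM $\mu_0$ by Remark \ref{rm4}, it suffices to rule out solutions with $z_1\neq z_2$. Dividing the two equations of (\ref{e14}) gives
$$\frac{z_1}{z_2}=\left(\frac{1+z_1}{1+z_2}\right)^{m}\left(\frac{1+z_2}{1+z_1}\right)^{k-m}=\left(\frac{1+z_1}{1+z_2}\right)^{2m-k}.$$
Setting $u=z_1/z_2$ and $v=(1+z_1)/(1+z_2)$, this reads $u=v^{2m-k}$. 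Since $z_1\neq z_2$, exactly one of $u>1$, $v>1$ holds together, and likewise for $<1$; that is, $u$ and $v$ lie on the same side of $1$ (because $z_1>z_2\iff 1+z_1>1+z_2$). Now $u=v^{2m-k}$ forces $2m-k>0$, i.e. $k<2m-1$ would be needed — more precisely, if $k\geq 2m-1$ then the exponent $2m-k\leq 1$, and I would show this is incompatible with a nontrivial solution.

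The key quantitative step is to compare $u$ and $v$ when $z_1\neq z_2$. Assume WLOG $z_1>z_2>0$, so $u>1$ and $v>1$. An elementary inequality gives $v=(1+z_1)/(1+z_2)<z_1/z_2=u$ (cross-multiplying, $z_2+z_1z_2<z_1+z_1z_2$, i.e. $z_2<z_1$, true); also $v>1$. Hence $1<v<u$. If $2m-k\leq 1$, then $v^{2m-k}\leq v^1<u$ when $2m-k=1$ is the extreme case, and $v^{2m-k}\le v<u$ for $2m-k\le 1$, contradicting $u=v^{2m-k}$. Thus no solution with $z_1>z_2$ exists; by symmetry none with $z_1<z_2$ exists either. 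The only remaining possibility is $z_1=z_2$, which gives $\mu_0$, so on $I_3$ there is no AGM other than the TI one. (The case $2m-k=1$, i.e. $k=2m-1$, is included since then $v^{2m-k}=v<u$ strictly.)

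The main obstacle is making the comparison $v^{2m-k}<u$ airtight across the full range $0\le m\le k$ and both orderings of $z_1,z_2$; in particular one must be careful that when $2m-k\le 0$ the quantity $v^{2m-k}\le 1<u$, which again contradicts $u>1$, so those cases are even easier. I expect the write-up to reduce to: (i) reduce to $z_1\ne z_2$ using Remark \ref{rm4}; (ii) derive $u=v^{2m-k}$; (iii) prove $1<v<u$ for $z_1>z_2$; (iv) conclude $v^{2m-k}<u$ whenever $2m-k\le 1$, i.e. $k\ge 2m-1$, a contradiction; (v) invoke the $z_1\leftrightarrow z_2$ symmetry for the other ordering. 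No heavy computation is needed beyond these elementary monotonicity facts.
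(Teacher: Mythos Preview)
Your proof is correct and follows essentially the same route as the paper: both divide the two equations of (\ref{e14}) to eliminate $\lambda$ and then use a monotonicity argument to force $z_1=z_2$ when $k\ge 2m-1$. The paper packages this as the single observation that $f(x)=x(1+x)^{k-2m}$ is increasing on $(0,\infty)$ exactly when $k\ge 2m-1$, which is the same content as your chain $1<v<u\Rightarrow v^{2m-k}\le v<u$.
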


\begin{proof}
We show that under the conditions of the theorem, the system of equations (\ref{e14}) has a unique solution of the form $z_1=z_2$, which corresponds to TISGM $\mu_{0}$. If the first equation (\ref{e14}) is divided by the second, then
\begin{equation} \label{e15}
z_1(1+z_1)^{k-2m}=z_2(1+z_2)^{k-2m}.
\end{equation}

It is easy to check that the function $f(x)=x(1+x)^{k-2m}$ is increasing for $k\geq2m-1$.
Therefore, if $k\geq2m-1$, then the system of equations (\ref{e14}) has only a solution
of the form $z_1=z_2$, and due to Remark \ref{rm4}, such a solution corresponds to a unique TISGM $\mu_{0}$, i.e. there is no AGM except from TI.
\end{proof}

\begin{thm}\label{thm5}
Let $k\geq2$. For the \emph{HC}-model in the case $G=\textit{wand}$ on $I_{3}$ for some $\lambda>0$ there are AGMs (non TI) if and only if $2k>2m\geq k+2$.
\end{thm}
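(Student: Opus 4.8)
The plan is to work with the ratio equation (\ref{e15}) already derived on $I_3$, namely $z_1(1+z_1)^{k-2m}=z_2(1+z_2)^{k-2m}$, and to decide when it admits a solution with $z_1\neq z_2$. Writing $p=2m-k$ and $g(x)=x(1+x)^{-p}$ on $(0,\infty)$, equation (\ref{e15}) becomes $g(z_1)=g(z_2)$, and a direct computation gives $g'(x)=\bigl(1-(p-1)x\bigr)(1+x)^{-p-1}$. Hence if $p\le 1$ (i.e. $k\ge 2m-1$) the function $g$ is strictly increasing and $g(z_1)=g(z_2)$ forces $z_1=z_2$ — the same mechanism already used in the proof of Theorem \ref{thm4}. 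If instead $p\ge 2$, then $g$ strictly increases on $\bigl(0,\tfrac{1}{p-1}\bigr)$, strictly decreases on $\bigl(\tfrac{1}{p-1},\infty\bigr)$ and tends to $0$ at both endpoints, so every value $c\in\bigl(0,g(\tfrac{1}{p-1})\bigr)$ is attained at exactly two points $z_1<\tfrac{1}{p-1}<z_2$.

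For the direction $\Leftarrow$ I would assume $2k>2m\ge k+2$, so $p=2m-k\ge 2$ and $k-m\ge 1$. Pick any $c\in\bigl(0,g(\tfrac{1}{p-1})\bigr)$, let $z_1\neq z_2$ be the two roots of $g(x)=c$, and set $\lambda=z_1(z_1+z_2)^k\bigl((1+z_1)^m(1+z_2)^{k-m}\bigr)^{-1}>0$, so that the first equation of (\ref{e14}) holds by construction. Substituting this $\lambda$ into the second equation of (\ref{e14}) and simplifying reduces it exactly to $z_1(1+z_2)^{p}=z_2(1+z_1)^{p}$, i.e. to $g(z_1)=g(z_2)$, which holds; hence $(z_1,z_2)$ together with this $\lambda$ solves (\ref{e14}). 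The resulting collection $\{z_x\}$ takes the two distinct values $(z_1,z_2)$ and $(z_2,z_1)$, and both actually occur on the tree since $k-m\ge 1$ forces every $z$-vertex to have at least one $t$-successor; thus $\{z_x\}$ is non-constant, so the measure $\mu_{m,m}$ it generates is not translation invariant, i.e. it is an AGM.

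For the direction $\Rightarrow$ I would start from the existence of an AGM on $I_3$ for some $\lambda>0$. Being non translation invariant, it corresponds to a solution of (\ref{e14}) with $z\neq t$; on $I_3$ one has $t=(z_2,z_1)$, so this means $z_1\neq z_2$ (if $z_1=z_2$ then $z=t$, whence the measure is $\mu_0$ by Remark \ref{rm1}). By (\ref{e15}) and the strict monotonicity of $f(x)=x(1+x)^{k-2m}$ established in the proof of Theorem \ref{thm4}, the case $k\ge 2m-1$ is impossible, so $k\le 2m-2$, that is $2m\ge k+2$. It then remains to exclude $m=k$ (the only $m$ with $m\ge k$, since $0\le m\le k$): with $m=r=k$ the matrix $M$ makes every successor inherit the label of its parent, so the whole half-tree carries a single field value and the measure is translation invariant, contradicting that it is an AGM. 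Hence $m<k$, i.e. $2m<2k$, and together with $2m\ge k+2$ this gives $2k>2m\ge k+2$.

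The step I expect to be the main obstacle is the exclusion of $m=k$ in the direction $\Rightarrow$: for $m=k$ the system (\ref{e14}) degenerates into the translation-invariant system (\ref{e5}), which genuinely has solutions with $z_1\neq z_2$ whenever $\lambda>\lambda_{cr}^{(1)}$, so the equivalence cannot be settled by the algebra of (\ref{e14}) alone — one must use the combinatorial structure of the boundary law encoded by $M$ to see that such solutions then produce a monochromatic, translation-invariant configuration rather than an AGM. A minor but essential point, by contrast, is the verification that a single scalar $\lambda$ can satisfy both equations of (\ref{e14}) simultaneously, which works precisely because the compatibility condition between them is exactly $g(z_1)=g(z_2)$.
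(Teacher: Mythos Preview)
Your proof is correct, and in some respects more complete than the paper's. For sufficiency, the paper proceeds by writing $2m-k=n\ge 2$, factoring the difference of the two sides of (\ref{e15}) as $(z_1-z_2)\,w(z_1,z_2)=0$ with
\[
w(z_1,z_2)=z_1z_2\Bigl(C_n^2+C_n^3(z_1+z_2)+\cdots+C_n^n(z_1^{n-2}+\cdots+z_2^{n-2})\Bigr)-1,
\]
and then showing via the intermediate value theorem and Descartes' rule of signs that $w(\cdot,z_2)=0$ has exactly one positive root for each fixed $z_2>0$. Your route via the unimodality of $g(x)=x(1+x)^{-p}$ is an equivalent but cleaner calculus argument for the same fact, and you go further by explicitly exhibiting the compatible $\lambda$ and checking that the second equation of (\ref{e14}) then reduces to $g(z_1)=g(z_2)$; the paper leaves this compatibility implicit.

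For necessity, the paper simply writes ``It follows from Theorem \ref{thm4}'', which indeed yields $2m\ge k+2$ but does not touch the strict inequality $2m<2k$. You are right to single out the boundary case $m=k$: there the system (\ref{e14}) coincides with (\ref{e5}) and genuinely has solutions with $z_1\ne z_2$ for large $\lambda$, so the exclusion cannot be algebraic; your combinatorial argument that for $m=r=k$ the matrix $M$ propagates a single label down the tree (hence the field is constant and the measure translation invariant) is exactly what is needed to close this gap. The paper presumably regards $m=k$ as already absorbed into the translation-invariant case (cf. Remark \ref{rm5} and Section 4), but does not say so in the proof.
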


\begin{proof}
\emph{Necessity.} It follows from Theorem \ref{thm4}.

\emph{Sufficiency.} Let $2m-k=n$  $(n\geq2).$ Then from (\ref{e15}) after simple algebra, we get
$$(z_1-z_2)\cdot\left[z_1z_2\Big(C_n^2+C_n^3(z_1+z_2)+C_n^4(z_1^2+z_1z_2+z_2^2)+\cdots+C_n^n(z_1^{n-2}+\cdots+z_2^{n-2})\Big)-1\right]=0.$$
Hence $z_1=z_2$ or $w(z_1,z_2)=0$, where
$$w(z_1,z_2):=z_1z_2\Big(C_n^2+C_n^3(z_1+z_2)+C_n^4(z_1^2+z_1z_2+z_2^2)+\cdots+C_n^n(z_1^{n-2}+\cdots+z_2^{n-2})\Big)-1=0.$$

If $z_1=z_2$, then we obtain a solution from $I_1$ and, by Remark \ref{rm4}, this solution corresponds to the TISGM $\mu_{0}$.

Let $z_1\neq z_2$. Consider the equation $w(z_1,z_2)=0$ with respect to the variable $z_1$. Then it is clear that $w(0,z_2)=-1<0$ and $w(z_1,z_2)\rightarrow+\infty$ as $z_1\rightarrow+\infty$. Then the equation $w(z_1,z_2)=0$ for variable $z_1$ has at least one positive root. On the other hand, according to Descartes' theorem, the equation $w(z_1,z_2)=0$ for variable $z_1$ has at most one positive root.
Consequently, the equation $w(z_1,z_2)=0$ for variable $z_1$ has exactly one positive root, i.e. there is a solution $(z_1,z_2)$ to the system of equations (\ref{e14}) except for $(z_1,z_1)$ (i.e. $z_1=z_2$) and it corresponds to the AGM. Due to symmetry, the same arguments also apply to the variable $z_2$.

In particular, for $2m-k=2 \ (n=2)$ there exists an AGM (not TI) if $z_1 z_2=1,$ and for $2m-k=3\ (n=3)$ such a measure exists if $z_1^2z_2+z_1z_2^2+3z_1z_2=1.$
\end{proof}

We consider the case $k=4$ and $m=3$, for which the inequalities $2k>2m\geq k+2$ hold.
In this case, the system of equations (\ref{e14}) has the form
\begin{equation} \label{e16}
\begin{cases}
    z_1=\lambda\left(\frac{1+z_1}{z_1+z_2}\right)^3\cdot\frac{1+z_2}{z_1+z_2},\\
    z_2=\lambda\left(\frac{1+z_2}{z_1+z_2}\right)^3\cdot\frac{1+z_1}{z_1+z_2}.
\end{cases}
\end{equation}

\begin{pro}\label{p1}
Let $\lambda_{cr}=1$. Then the system of equations (\ref{e16}) for $\lambda\leq\lambda_{cr}$ has a unique solution $(z_1,z_1)$, and for $\lambda>\lambda_{cr}$ it has exactly three solutions $ (z_1,z_1)$, $(z_{1},z_{2})$, $(z_{2},z_{1})$.
\end{pro}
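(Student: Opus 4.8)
The plan is to follow the same strategy that was used in the proof of Theorem \ref{thm2}: reduce the system to a single scalar equation by introducing a convenient auxiliary variable, then invert the roles of $\lambda$ and the auxiliary variable and study the monotonicity of the resulting function $\lambda(t)$. First I would divide the first equation of (\ref{e16}) by the second to eliminate $\lambda$, obtaining
$$\frac{z_1}{z_2}=\left(\frac{1+z_1}{1+z_2}\right)^2,$$
which is exactly (\ref{e7}) with $k=2$ on the right but with exponent $k-2m+\dots$; more directly, this is the $n=2$ special case of (\ref{e15}), so it factors as $(z_1-z_2)(z_1z_2-1)=0$ (this is the case $2m-k=2$ noted at the end of the proof of Theorem \ref{thm5}). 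Hence either $z_1=z_2$, giving the translation-invariant branch, or $z_1z_2=1$, i.e. $z_2=1/z_1$.

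For the TI branch $z_1=z_2=z$, substituting into (\ref{e16}) gives $z=\lambda\bigl((1+z)/(2z)\bigr)^4$, which is exactly (\ref{e6}) with $k=4$; by Lemma 1 this has a unique positive solution for every $\lambda>0$. So the TI solution $(z_1,z_1)$ always exists and is unique. The substantive part is the branch $z_2=1/z_1$. Set $t=z_1$ (so $z_2=1/t$, $t>0$, $t\neq 1$). Substituting $z_1=t$, $z_2=1/t$ into the first equation of (\ref{e16}) and simplifying $z_1+z_2=(t^2+1)/t$, $1+z_1=1+t$, $1+z_2=(t+1)/t$, one gets after clearing denominators an explicit expression
$$\lambda=\lambda(t)=\frac{t\,(t^2+1)^4}{t(1+t)^3\cdot(1+t)}\cdot(\text{a power of }t)\ \longrightarrow\ \lambda(t)=\frac{(t^2+1)^4}{t\,(1+t)^4}\cdot\frac{1}{\,?\,},$$
so I would carry out this routine algebra carefully to land on a clean closed form for $\lambda(t)$; I expect it to be a rational function symmetric under $t\mapsto 1/t$ (mirroring the observation that $t$ and $1/t$ give the same solution up to swapping $z_1,z_2$), with a single critical point at $t=1$.

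The key step is then to show that $\lambda(t)$ is strictly monotone on $(0,1)$ and on $(1,\infty)$, with $\lambda(1)=1=\lambda_{cr}$ as its minimum, and that $\lambda(t)\to\infty$ as $t\to 0^+$ and as $t\to\infty$. Granting this, each $\lambda>1$ is attained at exactly one $t>1$ and exactly one $t<1$ (related by $t\mapsto 1/t$), which together with the swap symmetry $(z_1,z_2)\mapsto(z_2,z_1)$ yields exactly the two extra solutions $(z_1,z_2)$ and $(z_2,z_1)$; for $\lambda=1$ the two coalesce with the TI solution; and for $\lambda<1$ there is no solution with $z_1\neq z_2$. Monotonicity will be established by computing $\lambda'(t)$, factoring out $(t-1)$ to whatever multiplicity occurs at $t=1$, and checking that the remaining factor has constant sign for $t>0$; this is the main obstacle, exactly as in Theorem \ref{thm2} where the analogous polynomial $\mathcal R(t,k)$ had to be shown nonnegative. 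Here, however, $k=4$ and $m=3$ are fixed small integers, so the polynomial remainder after factoring has low degree and its positivity can be verified directly (e.g. by the substitution $t=1+x$ and checking that all coefficients are nonnegative, or simply by inspection of a degree-$\le 6$ polynomial). Finally, as in Theorem \ref{thm2}, by Theorem \ref{thm1} each positive solution of (\ref{e16}) determines a splitting Gibbs measure, completing the count: one measure for $\lambda\le\lambda_{cr}$ and three for $\lambda>\lambda_{cr}$.
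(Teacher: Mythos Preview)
Your proposal is correct and follows essentially the same route as the paper: factor the quotient of the two equations as $(z_1-z_2)(z_1z_2-1)=0$, handle the $z_1=z_2$ branch by Lemma~1, and on the branch $z_1z_2=1$ substitute to obtain the explicit expression $\lambda(z_2)=\dfrac{(z_2^2+1)^4}{z_2^2(z_2+1)^4}$, which is symmetric under $z_2\mapsto 1/z_2$ and has its unique minimum $\lambda(1)=1$. The paper simply asserts the monotonicity of $\lambda(z_2)$ on $(0,1)$ and $(1,\infty)$ without detail, whereas your plan to compute $\lambda'$ and factor out $(t-1)$ would yield $\lambda'(t)/\lambda(t)=\dfrac{2(t-1)(t^2+4t+1)}{t(1+t)(1+t^2)}$, so the ``remaining factor'' is just $t^2+4t+1>0$ and no heavier polynomial analysis is needed.
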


\begin{proof}
From the system of equations (\ref{e16}) for $n=2$ we have $(z_1-z_2)(z_1z_2-1)=0.$ Hence $z_1=z_2$ or $z_1z_2=1.$ It is known that the solution $z_1 =z_2$ corresponds to the only TISGM.

Let $z_1=\frac1{z_2}$, for $z_1\neq z_2$. Then from (\ref{e16}) we obtain
\begin{equation} \label{e17}
\lambda(z_2)=\frac{(z_2^2+1)^4}{z_2^2(z_2+1)^4}.
\end{equation}
Analyzing the function $\lambda(z_2)$, we have that the function $\lambda(z_2)$ decreases for $z_2<1$, increases for $z_2>1$ and at the $z_2=1$ it reaches its minimum:
$$\lambda_{\min}(z_2)=\lambda(1)=\lambda_{cr}=1.$$
This means that each value of $\lambda$ corresponds to only one value $z_2>1$ (or $z_2< 1$) for
$\lambda >\lambda_{cr}$, $z_2 = 1$ for $\lambda=\lambda_{cr}$ and equation (\ref{e17}) has no solutions for $\lambda<\lambda_{cr} $.
Hence, due to symmetry, the system of equations (\ref{e16}) for $0<\lambda\leq\lambda_{cr}$ has a unique solution
of the form $(z_1; z_1)$, and for $\lambda >\lambda_{cr}$ has exactly three solutions $(z_1; z_1)$, $(z_1; z_2)$ and $(z_2; z_1)$, where $z_1\neq z_2$.
\end{proof}

By virtue of Theorem \ref{thm5} and Proposition \ref{p1}, the following theorem holds
\begin{thm}\label{thm6}
Let $k=4$, $m=3$ and $\lambda_{cr}=1$. Then for the HC model in the case of $G=\textit{wand}$ on $I_{3}$ for $\lambda\leq\lambda_{cr}$ there is a unique AGM, which coincides with the only TI $\mu_0$, and for $\lambda>\lambda_{cr}$ there are exactly three AGMs $\mu_0$, $\mu_1$ and $\mu_2$, where $\mu_1$ and $\mu_2$ are AGMs (not TI).
\end{thm}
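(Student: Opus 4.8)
The plan is to obtain Theorem~\ref{thm6} as an immediate corollary of Proposition~\ref{p1}, Remark~\ref{rm4}, and the correspondence between boundary laws and splitting Gibbs measures, with only a short bookkeeping step added. First I would observe that for $k=4$ and $m=3$ the restriction of the system (\ref{e12}) to $I_3$ is precisely (\ref{e14}), which in this case is the system (\ref{e16}) (here membership in $I_3$ forces $r=m=3$). By Theorem~\ref{thm1}, every positive solution of (\ref{e16}), regarded as the boundary law $(z_1,z_2,t_1,t_2)=(z_1,z_2,z_2,z_1)$ on the half-tree, determines a unique splitting Gibbs measure. Hence it suffices to count the positive solutions of (\ref{e16}) and to decide, solution by solution, whether the associated measure is translation invariant.

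Next I would invoke Proposition~\ref{p1}. For $0<\lambda\le\lambda_{cr}=1$ the only positive solution is the diagonal one, of the form $(z^*,z^*)$; this point lies in $I_1$, so by Remark~\ref{rm4} the measure it produces is the unique TISGM $\mu_0$. This already proves the first assertion: for $\lambda\le 1$ the only measure obtained on $I_3$ is $\mu_0$, which is translation invariant. For $\lambda>\lambda_{cr}$, Proposition~\ref{p1} gives exactly three positive solutions, $(z^*,z^*)$, $(z_1,z_2)$ and $(z_2,z_1)$ with $z_1\ne z_2$ (and in fact $z_1z_2=1$); the diagonal one again yields $\mu_0$.

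It remains to see that the two non-diagonal solutions give genuinely non-translation-invariant, pairwise distinct measures. On $I_3$ the boundary law attached to $(z_1,z_2)$ has $z=(z_1,z_2)$ and $t=(t_1,t_2)=(z_2,z_1)$, and since $z_1\ne z_2$ we have $z\ne t$; by Remark~\ref{rm1} and the definition of an AGM, the resulting measure is not TI. Denote it $\mu_1$, and let $\mu_2$ be the measure coming from $(z_2,z_1)$. The two boundary laws differ, so by Theorem~\ref{thm1} the measures $\mu_0,\mu_1,\mu_2$ are pairwise distinct; and Theorem~\ref{thm5}, whose hypothesis $2k>2m\ge k+2$ reads $8>6\ge 6$ and is therefore satisfied, confirms independently that non-TI AGMs do occur on $I_3$ in this regime. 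This yields exactly three measures for $\lambda>\lambda_{cr}$, two of which are not translation invariant, as claimed.

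The only step requiring care — the ``hard part'', modest as it is — is this last identification: one must be sure that the off-diagonal solutions are not accidentally translation invariant, nor identified with one another. Both points are settled purely by the observation $z\ne t$ on $I_3$ together with the uniqueness of the measure attached to each boundary law in Theorem~\ref{thm1}; no further analysis of the recursion is needed, since all the analytic content (the monotonicity of $\lambda(z_2)$ in (\ref{e17}) and the value $\lambda_{cr}=1$) has already been established in Proposition~\ref{p1}.
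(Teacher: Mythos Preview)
Your proposal is correct and follows exactly the paper's approach: the paper's entire proof is the single sentence ``By virtue of Theorem~\ref{thm5} and Proposition~\ref{p1}, the following theorem holds,'' and you have simply spelled out the bookkeeping (the identification of the diagonal solution with $\mu_0$ via Remark~\ref{rm4}, and the non-TI nature of the off-diagonal solutions via $z\ne t$) that the paper leaves implicit. No additional argument is needed.
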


\subsection{Alternative Gibbs measures on $I_4$}

We denote $z_1=z_2=z, \ t_1=t_2=t$. Then the system (\ref{e12}) on $I_{4}$ has the following form:
\begin{equation} \label{e18}
\begin{cases}
z=\lambda\left(\frac{1+z}{2z}\right)^{m}\cdot\left(\frac{1+t}{2t}\right)^{k-m},\\
t=\lambda\left(\frac{1+t}{2t}\right)^{r}\cdot\left(\frac{1+z}{2z}\right)^{k-r}.
\end{cases}
\end{equation}

\begin{thm}\label{thm7}
Let $k\geq2$ and $m+r\geq k-1$. Then for the HC model in the case of $G=\textit{wand}$ on $I_{4}$ there is no AGM except for TI.
\end{thm}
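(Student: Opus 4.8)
The plan is to collapse the two-component system (\ref{e18}) to a single scalar identity in $z$ and $t$ by dividing one equation by the other, and then to show that the hypothesis $m+r\ge k-1$ is exactly what forces $z=t$; since by Remark \ref{rm4} the diagonal solution corresponds to the unique TISGM $\mu_0$, that is all that is required.

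First I would divide the first equation of (\ref{e18}) by the second. The factors $\frac{1+z}{2z}$ and $\frac{1+t}{2t}$ survive with exponents $m-(k-r)$ and $(k-m)-r$ respectively; writing $n:=m+r-k$ these are $n$ and $-n$, so the ratio becomes $\dfrac{z}{t}=\left(\dfrac{t(1+z)}{z(1+t)}\right)^{n}$, i.e.
$$\frac{z^{\,n+1}}{(1+z)^{n}}=\frac{t^{\,n+1}}{(1+t)^{n}}.$$
The assumption $m+r\ge k-1$ is precisely $n\ge -1$. I would then set $g(x)=\dfrac{x^{\,n+1}}{(1+x)^{n}}$ for $x>0$ and compute the logarithmic derivative $\dfrac{g'(x)}{g(x)}=\dfrac{n+1}{x}-\dfrac{n}{1+x}=\dfrac{n+1+x}{x(1+x)}$, which is strictly positive for every $x>0$ exactly when $n\ge -1$. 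Hence $g$ is strictly increasing, therefore injective, and $g(z)=g(t)$ yields $z=t$.

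To finish, substituting $z=t$ into either equation of (\ref{e18}) returns equation (\ref{e6}), which by Lemma 1 has a unique positive solution for all $k\ge2$ and $\lambda>0$; by Remark \ref{rm4} this solution corresponds to the translation-invariant measure $\mu_0$. Thus on $I_4$ every solution of (\ref{e12}) is diagonal and there is no AGM other than the TI one. The argument has essentially no serious obstacle; the only points requiring care are the exponent bookkeeping when forming the ratio (checking that the exponent of $\frac{1+z}{2z}$ really is $m-(k-r)=n$ and that of $\frac{1+t}{2t}$ is $(k-m)-r=-n$) and the observation that the monotonicity condition $n\ge -1$ for $g$ coincides exactly with the stated hypothesis $m+r\ge k-1$ — which also makes transparent why the complementary range $m+r\le k-2$ is precisely where non-TI solutions can be expected, in the spirit of Theorem \ref{thm5}.
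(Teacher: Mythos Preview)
Your proof is correct and follows the same overall strategy as the paper: divide the two equations of (\ref{e18}) to obtain a single relation between $z$ and $t$, then show that $m+r\ge k-1$ forces $z=t$. The only difference is in the final step: the paper treats the borderline cases $n=-1,0$ by inspection and for $n\ge1$ substitutes $x=z^{1/n}$, $y=t^{1/n}$ and factors $(x-y)\bigl(x^{n}+\cdots+y^{n}+x^{n}y^{n}\bigr)=0$, whereas you handle all $n\ge-1$ uniformly via the strict monotonicity of $g(x)=x^{n+1}/(1+x)^{n}$ (checked by the logarithmic derivative). Your argument is a bit cleaner and makes the threshold $n=-1$ transparent, but the two proofs are essentially the same.
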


\begin{proof}
Let us show that under the conditions of the theorem, the system of equations (\ref{e18}) has a solution only of the form $z=t$, which corresponds to the TISGM $\mu_0$. If the first equation (\ref{e18}) is divided by the second, we get
\begin{equation} \label{e19}
z\left(\frac{2z}{1+z}\right)^{m+r-k}=t\left(\frac{2t}{1+t}\right)^{m+r-k}.
\end{equation}

It is easy to see that if $m+r=k-1$ and $m+r=k$ then the equation (\ref{e19}) has a solution of the form $z=t$.

Let $m+r>k.$ Then from (\ref{e19}) we get
$$\frac{z}{t}=\left(\frac{t}{z}\cdot\frac{1+z}{1+t}\right)^{n},$$
where $m+r-k=n,$ $(n\geq-1)$. Hence, denoting $\sqrt[n]{z}=x$, $\sqrt[n]{t}=y$, after some algebras, we obtain
$$\left(x-y\right) \left(x^{n}+\cdots+y^{n}+x^{n}y^{n}\right)=0,$$
which solution has the form $x=y$, and this solution corresponds to the TISGM $\mu_0$.
\end{proof}

\begin{thm}\label{thm8}
Let $k\geq2$. For the \emph{HC}-model in the case $G=\textit{wand}$ on $I_{4}$ for some $\lambda>0$ there are AGMs (non TI) if and only if $m+r\leq k-2$.
\end{thm}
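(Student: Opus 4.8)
The plan is to follow the scheme of the proof of Theorem \ref{thm5}. The necessity requires nothing new: it is the contrapositive of Theorem \ref{thm7}. Indeed, if $m+r\geq k-1$, then on $I_4$ the system (\ref{e18}) has only solutions with $z=t$, which by Remark \ref{rm1} correspond to the TI measure $\mu_0$; hence the existence of a non-TI AGM on $I_4$ forces $m+r\leq k-2$.

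For the sufficiency I would put $n:=k-m-r$, so that $n\geq2$ by hypothesis, and first reduce the system to a single scalar relation. Dividing the first equation of (\ref{e18}) by the second — exactly as in the derivation of (\ref{e19}) — and simplifying, I expect to obtain that a pair $z,t>0$ admits a $\lambda>0$ solving (\ref{e18}) if and only if
$$\frac{z}{t}=\left(\frac{z(1+t)}{t(1+z)}\right)^{\!n},$$
equivalently $g(z)=g(t)$, where $g(x):=(1+x)^{n}/x^{n-1}$; and that, once $g(z)=g(t)$ holds, the value $\lambda$ read off from the first equation of (\ref{e18}) automatically satisfies the second. The problem thus reduces to producing $z\neq t$ with $g(z)=g(t)$.

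Next I would analyze $g$ on $(0,\infty)$. The derivative factors as $g'(x)=(1+x)^{n-1}\bigl(x-(n-1)\bigr)/x^{n}$, so, using $n-1\geq1$, the function $g$ strictly decreases on $(0,n-1)$, strictly increases on $(n-1,\infty)$, tends to $+\infty$ at both ends of $(0,\infty)$, and attains its minimum $g(n-1)=n^{n}/(n-1)^{n-1}$ at $x=n-1$. Hence every $c>g(n-1)$ is attained by $g$ at exactly two points, one in $(0,n-1)$ and one in $(n-1,\infty)$. Choosing any $t_0\neq n-1$ and letting $z_0$ be the other preimage of $c_0:=g(t_0)$, I get $z_0\neq t_0$ with $g(z_0)=g(t_0)$; reading $\lambda_0:=z_0\left(\tfrac{2z_0}{1+z_0}\right)^{m}\left(\tfrac{2t_0}{1+t_0}\right)^{k-m}>0$ off the first equation then makes $(z_0,t_0)$ a non-TI solution of (\ref{e18}) with $\lambda=\lambda_0$. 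Since the corresponding boundary field takes two distinct values, the Gibbs measure it determines is not translation-invariant, i.e. it is an AGM (in fact, letting $t_0$ range over $(n-1,\infty)$ yields a whole interval of admissible $\lambda$).

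The step I expect to be the main (though modest) obstacle is the reduction in the second paragraph: verifying cleanly that imposing $g(z)=g(t)$ together with the first equation of (\ref{e18}) recovers the full system, i.e. that the two equations force the same value of $\lambda$. The remaining pieces — the monotonicity analysis of $g$ and the positivity of $\lambda_0$ — are routine, and the hypothesis $n\geq2$ enters precisely where it is needed: for $n\leq1$ the map $g$ is strictly monotone and admits no nontrivial pairs $z\neq t$ with $g(z)=g(t)$, which is the analytic counterpart of Theorem \ref{thm7}.
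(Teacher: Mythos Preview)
Your argument is correct and close in spirit to the paper's, which simply says the proof is analogous to that of Theorem~\ref{thm5}. The main difference is in how you produce a nontrivial pair $(z,t)$: the paper's Theorem~\ref{thm5} template factors the ratio relation as $(z-t)\cdot w(z,t)=0$ and then uses a sign argument together with Descartes' rule to show $w(\cdot,t)=0$ has a positive root; you instead rewrite the ratio relation as $g(z)=g(t)$ with $g(x)=(1+x)^n/x^{n-1}$ and use the unimodal shape of $g$ (via $g'(x)=(1+x)^{n-1}(x-(n-1))/x^n$) to get two distinct preimages. Your route is a bit cleaner and yields the extra information that the non-TI pairs are parametrized by $t_0\in(n-1,\infty)$, whereas the paper's polynomial approach is closer to what was already written for $I_3$.

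One reassurance: the step you flag as the ``main obstacle'' is in fact immediate. Writing $\lambda_1=z\left(\tfrac{2z}{1+z}\right)^{m}\left(\tfrac{2t}{1+t}\right)^{k-m}$ and $\lambda_2=t\left(\tfrac{2t}{1+t}\right)^{r}\left(\tfrac{2z}{1+z}\right)^{k-r}$, the quotient $\lambda_1/\lambda_2$ simplifies exactly to $\tfrac{z}{t}\left(\tfrac{t(1+z)}{z(1+t)}\right)^{n}=g(t)/g(z)$, so $g(z)=g(t)$ is equivalent to $\lambda_1=\lambda_2$; this is just the derivation of (\ref{e19}) read backwards. With that observed, your sufficiency argument is complete.
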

\begin{proof}
The proof is similar to the proof of Theorem \ref{thm5}.
\end{proof}

In particular, for $m+r=k-2 \ (n=2)$ there is an AGM (not TI) if $z t=1,$ and for $m+r=k-3\ (n=3)$ such a measure exists if $\sqrt[3]{t^2z^2}=\sqrt[3]{z}+\sqrt[3]{t}.$

Let $m+r\leq k-2$.

\textbf{Case $k=3, m=1, r=0$} (resp. $m=0$ and $r=1$). In this case from the system of equations ( \ref{e18}) we get
\begin{equation} \label{e21}
\begin{cases}
    z=\lambda\frac{1+z}{2z}\left(\frac{1+t}{2t}\right)^{2},\\
    t=\lambda\left(\frac{1+z}{2z}\right)^{3}.
\end{cases}
\end{equation}

\begin{pro}\label{p2}
Let $\lambda_{cr}^{(2)}=\frac{32}{27}$. Then the system of equations (\ref{e21}) for $\lambda>\lambda_{cr}^{(2)}$ has a unique solution $(z,z)$, for $\lambda=\lambda_{cr}^{ (2)}$ has two solutions $(z,z)$, $(2;\frac{1}{2})$ and for $0<\lambda<\lambda_{cr}^{(2)}$ has three solutions $(z,z)$, $(z_{1},t_{1})$, $(z_{2},t_{2})$.
\end{pro}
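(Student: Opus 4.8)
The plan is to reduce the $2\times2$ system \eqref{e21} to a single-variable equation parametrised by $\lambda$, exactly as in the proof of Theorem \ref{thm2} and Proposition \ref{p1}. First I would note that the diagonal solutions $z=t$ are governed by \eqref{e6} with $k=3$, which by Lemma 1 gives a unique positive value; so I restrict attention to solutions with $z\neq t$. Dividing the first equation of \eqref{e21} by the second, we are in the case $m+r=1=k-2$ (i.e. $n=2$ in the notation preceding the proposition), so the divided equation collapses to $zt=1$, i.e. $t=1/z$. This is the key simplification: it lets me eliminate $t$ entirely.

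Next I would substitute $t=1/z$ into, say, the second equation of \eqref{e21}. Since $\dfrac{1+t}{2t}=\dfrac{1+1/z}{2/z}=\dfrac{z+1}{2}$ and $\dfrac{1+z}{2z}$ stays as is, the equation $t=\lambda\left(\dfrac{1+z}{2z}\right)^3$ becomes $\dfrac1z=\lambda\dfrac{(1+z)^3}{8z^3}$, hence
$$
\lambda(z)=\frac{8z^2}{(1+z)^3}.
$$
(One should check consistency with the first equation; because $zt=1$ was derived from the ratio of the two equations, any $z$ solving this single equation together with $t=1/z$ automatically solves the whole system.) Then I would study $\lambda(z)$ on $(0,\infty)$: compute $\lambda'(z)=\dfrac{8z\,(2(1+z)-3z)}{(1+z)^4}=\dfrac{8z(2-z)}{(1+z)^4}$, so $\lambda$ increases on $(0,2)$, decreases on $(2,\infty)$, and attains its \emph{maximum} at $z=2$ with $\lambda(2)=\dfrac{32}{27}=\lambda_{cr}^{(2)}$, while $\lambda(z)\to0$ as $z\to0^+$ and as $z\to\infty$. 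Consequently: for $\lambda>\lambda_{cr}^{(2)}$ there is no $z$ with $z\neq t$, leaving only the unique diagonal solution $(z,z)$; for $\lambda=\lambda_{cr}^{(2)}$ there is exactly one extra solution $z=2$, $t=1/2$, i.e. the pair $(2;\tfrac12)$; and for $0<\lambda<\lambda_{cr}^{(2)}$ there are exactly two values $z_1\in(0,2)$ and $z_2\in(2,\infty)$, giving the two off-diagonal solutions $(z_1,t_1)$ and $(z_2,t_2)$ with $t_i=1/z_i$, in addition to $(z,z)$.

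The one point requiring a little care — and the place where the argument could go wrong if one is not attentive — is the bookkeeping of how many genuinely distinct solutions arise: since here $m\neq r$ the set $I_4$ is not symmetric under swapping $z\leftrightarrow t$, so the two roots $z_1,z_2$ of $\lambda(z)=\lambda$ give two \emph{different} ordered pairs rather than one pair counted twice, which is why the count is three (not two) for $\lambda<\lambda_{cr}^{(2)}$; and at $\lambda=\lambda_{cr}^{(2)}$ the double root $z=2$ merges these into the single extra pair $(2;\tfrac12)$. I would also verify that the diagonal solution of \eqref{e6} is never equal to any off-diagonal one (if it were, $z=1/z$ would force $z=1$, and $\lambda(1)=1\neq\lambda(1)$ from \eqref{e6} at $z=1$, a contradiction unless the parameters coincide, which they do not), so the three solutions are always pairwise distinct in the stated range. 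This completes the proof.
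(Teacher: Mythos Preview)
Your main argument is correct and reaches the same reduction as the paper: dividing the two equations of \eqref{e21} gives $(z-t)(zt-1)=0$, and on the branch $t=1/z$ one obtains $8z^{2}=\lambda(1+z)^{3}$. From this point the paper and you diverge in method. The paper rewrites the relation as the cubic $z^{3}+(3-8/\lambda)z^{2}+3z+1=0$, passes to the depressed cubic via $z=y+8/(3\lambda)-1$, computes the discriminant $D=16(27\lambda-32)/(27\lambda^{3})$, and then invokes Vieta's relations to show that, when $D<0$, exactly two of the three real roots are positive. Your route---viewing $\lambda(z)=8z^{2}/(1+z)^{3}$ as a function of $z$, locating its unique critical point at $z=2$ with maximum value $32/27$, and reading off the number of preimages---is shorter and more transparent: the critical parameter appears directly as a maximum, and no Cardano machinery is needed.

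One genuine slip is your final distinctness check. If the diagonal solution coincided with a solution of the $zt=1$ branch you would indeed get $z=1$, but then \emph{both} \eqref{e6} (with $k=3$) and your relation $\lambda=8z^{2}/(1+z)^{3}$ yield $\lambda=1$; the parameters do coincide, contrary to what you assert. Concretely, at $\lambda=1$ the positive roots of $8z^{2}=(1+z)^{3}$ are $z=1$ and $z=2+\sqrt{5}$, and the first of these gives $t=1$, i.e.\ it \emph{is} the diagonal solution, so only two distinct pairs survive there. The paper's proof does not examine this overlap either, so your argument is no weaker than the original on this point; simply delete the claim that the three solutions are always pairwise distinct.
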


\begin{proof} From the system of equations (\ref{e21}) we get $(z-t)(z t-1)=0.$ Hence $z=t$ or $zt=1.$ By virtue of Remark \ref{rm4}, the solution $z=t$ corresponds to the unique TISGM $\mu_0$.

Let $zt=1$, $z\neq t.$ Then from the first equation of the system (\ref{e21}) we obtain
\begin{equation} \label{e22}
 z^3+\left(3-\frac{8}{\lambda}\right)z^{2}+3z+1=0.
\end{equation}

Let us find positive solutions of equation (\ref{e22}) using Cardano's formula.
Let $z=y+\frac{8}{3\lambda}-1.$ Then
\begin{equation} \label{e23}
y^3-\frac{16\left(3\lambda-4\right)}{3\lambda^{2}} y-\frac{8(27\lambda^{2}-144\lambda+128)}{27\lambda^{3}}=0.
\end{equation}
For this equation we calculate
$$D=\frac{16(27\lambda-32)}{27\lambda^{3}}.$$

It is known that for $\lambda>\frac{32}{27}$ ($D>0$) the equation (\ref{e23}) has one negative root, for $\lambda=\frac{32}{27} $ ($D=0$) multiple positive root   $y=\frac{3}{4}$, (i.e. $t_0=\frac12$, $z_0=2$) and for $\lambda<\frac{32}{27}$ ($D<0$) has three real roots.
Hence, $f(y)=0$ has three real roots if $\lambda<\frac{32}{27}$. Let these solutions be $z_1, z_2, z_3$. By the Vieta's formulas
\begin{equation} \label{e24}
z_1+z_2+z_3=\frac{8}{\lambda}-3, \qquad  z_1z_2+z_1z_3+z_2z_3=3, \qquad  z_1z_2z_3=-1.
\end{equation}
From the third equality (\ref{e24}) we deduce that the equation have either two positive solutions or does not have any positive solutions. If all three factors are negative, then their sum also must be negative. In this case,  from the first equality (\ref{e24}) we obtain $\lambda>\frac{8}{3}$. On the other hand, the roots $z_1, z_2, z_3$ exist for $\lambda<\frac{32}{27}$. This means that the equation cannot have three negative solutions. It means that one of the roots is negative, and the other two are positive.
\end{proof}
Due to Theorem \ref{thm8} and Proposition \ref{p2}, we obtain the following theorem.
\begin{thm}
Let $k=3$, $r+m\leq 1$ and $\lambda_{cr}^{(2)}=\frac{32}{27}$. Then for the HC model in the case of $G=\textit{wand}$ on the $I_{4}$ for $\lambda>\lambda_{cr}^{(2)}$ there is a unique AGM which coincides with the only TISGM $\mu_0$, for $\lambda=\lambda_{cr}^{(2)}$ there are two AGMs $\mu_0$ and $\mu^{'}$, where $\mu^{'}$ is AGM (not TI), and for $0<\lambda<\lambda_{cr}^{(2)}$ there are exactly three AGMs $\mu_0$, $\mu_1$ and $\mu_2$, where $\mu_1$ and $ \mu_2$ are AGMs (not TI).
\end{thm}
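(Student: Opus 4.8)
The plan is to deduce the statement directly from Theorem~\ref{thm8} and Proposition~\ref{p2}, together with the general principle recorded after~(\ref{e4}) that every boundary field $z=\{z_x\}$ solving the functional equation~(\ref{e4}) determines a unique splitting Gibbs measure, and with the identification of translation-invariant solutions with the measure $\mu_0$.

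First I would note that $k=3$ with $r+m\le 1$ is exactly the regime $m+r\le k-2$ of Theorem~\ref{thm8}, so the existence of non-translation-invariant AGMs on $I_4$ for suitable $\lambda$ is already guaranteed and only a count is missing. On $I_4$ the whole field is parametrised by the single pair $(z,t)\in R^2_+$ through the matrix $M$, so counting AGMs reduces to counting positive solutions of system~(\ref{e18}). Up to the relabelling $z\leftrightarrow t$ — which interchanges the parameter pairs $(0,1)$ and $(1,0)$ and carries~(\ref{e18}) to itself — it suffices to treat $m=1$, $r=0$, i.e. system~(\ref{e21}); its positive solutions are exactly those listed in Proposition~\ref{p2}: a unique solution $(z,z)$ when $\lambda>\lambda_{cr}^{(2)}$, the two solutions $(z,z)$ and $(2,\tfrac{1}{2})$ when $\lambda=\lambda_{cr}^{(2)}$, and three solutions $(z,z),(z_1,t_1),(z_2,t_2)$ when $0<\lambda<\lambda_{cr}^{(2)}$.

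Next I would translate solutions into measures. The diagonal solution $z=t$ is the point of $I_1$; by Lemma~1 it exists and is unique for every $\lambda>0$, and by Remark~\ref{rm4} the measure it carries is the translation-invariant $\mu_0$. Every solution with $z\ne t$, in contrast, gives a field taking two distinct values along the half-tree, hence a measure that is not translation invariant, and distinct solutions give distinct measures. Feeding the three cases of Proposition~\ref{p2} into this correspondence then yields precisely the asserted picture: the single AGM $\mu_0$ for $\lambda>\lambda_{cr}^{(2)}$; the pair $\mu_0$ and a non-TI measure $\mu'$ for $\lambda=\lambda_{cr}^{(2)}$; and the triple $\mu_0,\mu_1,\mu_2$ with $\mu_1,\mu_2$ non-TI for $0<\lambda<\lambda_{cr}^{(2)}$. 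The choice $(m,r)=(0,1)$ needs no separate argument, being the $z\leftrightarrow t$ image of the case just treated.

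Since the analytic work is entirely contained in Theorem~\ref{thm8} and Proposition~\ref{p2}, there is no real obstacle; the only point calling for care is the bookkeeping — verifying that over each stated range of $\lambda$ the off-diagonal solutions of Proposition~\ref{p2} are genuinely distinct from the diagonal one and from one another, so that the counts of non-TI AGMs are as claimed. This rests on the factorization $(z-t)(zt-1)=0$ of~(\ref{e21}): a solution lying on both factors must satisfy $z=t=1$, and one checks this forces $\lambda=1$, so it is precisely at that value that the diagonal branch and the branch $zt=1$ have to be compared against each other; away from it the correspondence of the previous paragraph delivers the theorem.
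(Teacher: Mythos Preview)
Your proposal matches the paper's proof exactly: the paper's entire argument for this theorem is the single sentence ``Due to Theorem~\ref{thm8} and Proposition~\ref{p2}, we obtain the following theorem.'' Your elaboration of the solution-to-measure correspondence and your careful remark about the special value $\lambda=1$ (where the diagonal branch meets the branch $zt=1$) go beyond what the paper records, but the underlying strategy is identical; note, as in the paper, that Proposition~\ref{p2} is stated only for $(m,r)\in\{(1,0),(0,1)\}$, so the case $m=r=0$ of the hypothesis $r+m\le 1$ is left to Remark~\ref{rm2} and the reference~\cite{KhU1}.
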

\textbf{Case $k=4, m=1, r=0$} (resp. $m=0$, $r=1$). In this case from (\ref{e18}) we get

\begin{equation} \label{e25}
\begin{cases}
    z=\lambda\frac{1+z}{2z}\left(\frac{1+t}{2t}\right)^{3},\\
    t=\lambda\left(\frac{1+z}{2z}\right)^{4}.
\end{cases}
\end{equation}

\begin{pro}\label{p3}
   Let $\lambda_{cr}^{(3)}\approx6.913562404$. Then the system of equations (\ref{e25}) for $\lambda>\lambda_{cr}^{(3)}$ has a unique solution $(z,z)$, for $\lambda=\lambda_{cr}^{ (3)}$ has two solutions $(z,z)$, $(z^{'},t^{'})$ and for $0<\lambda<\lambda_{cr}^{(3)}$ has three solutions $(z,z)$, $(z_{1},t_{1})$, $(z_{2},t_{2})$.
\end{pro}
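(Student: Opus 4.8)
The plan is to follow the scheme of Proposition~\ref{p2}: reduce the search for non--translation-invariant solutions of (\ref{e25}) to a one-parameter equation, and then study the resulting function $\lambda(\cdot)$.

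First I would treat the case $z\neq t$. Dividing the first equation of (\ref{e25}) by the second gives
$$\frac{z}{t}=\left(\frac{z(1+t)}{t(1+z)}\right)^{3}.$$
Since $z,t,\lambda>0$, one may take cube roots; putting $\rho:=\sqrt[3]{z/t}>0$ turns this into $\rho=\rho^{3}\cdot\frac{1+t}{1+z}$, i.e. $1+t=\rho^{-2}(1+z)$. Combining this with $z=\rho^{3}t$ yields $z(1-\rho)=\rho(1-\rho)(1+\rho)$, so either $\rho=1$ (that is $z=t$, which by Remark~\ref{rm4} corresponds to the unique TISGM $\mu_{0}$) or $\rho\neq1$ and
$$z=z(\rho)=\rho+\rho^{2},\qquad t=t(\rho)=\frac{1+\rho}{\rho^{2}};$$
this is precisely the relation $\sqrt[3]{z}+\sqrt[3]{t}=\sqrt[3]{z^{2}t^{2}}$ recorded after Theorem~\ref{thm8}. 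Substituting $z(\rho),t(\rho)$ into the second equation of (\ref{e25}) (the first one then gives the same value) I obtain
$$\lambda=\lambda(\rho)=t(\rho)\left(\frac{2z(\rho)}{1+z(\rho)}\right)^{4}=\frac{16\,\rho^{2}(1+\rho)^{5}}{(1+\rho+\rho^{2})^{4}}.$$
Thus the non-TI solutions of (\ref{e25}) correspond bijectively to the positive roots $\rho\neq1$ of the equation $\lambda(\rho)=\lambda$, and it remains to count them.

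Next I would analyze $\lambda(\rho)$ on $(0,+\infty)$. One has $\lambda(0^{+})=\lambda(+\infty)=0$, and a direct computation gives
$$\frac{\lambda'(\rho)}{\lambda(\rho)}=\frac{2}{\rho}+\frac{5}{1+\rho}-\frac{4(1+2\rho)}{1+\rho+\rho^{2}}=\frac{-(\rho^{3}+3\rho^{2}-5\rho-2)}{\rho(1+\rho)(1+\rho+\rho^{2})}.$$
By Descartes' theorem the cubic $\rho^{3}+3\rho^{2}-5\rho-2$ (sign pattern $+,+,-,-$) has exactly one positive root $\rho^{*}$, which can be written out by Cardano's formula and satisfies $\rho^{*}\approx1.44$. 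Hence $\lambda(\rho)$ strictly increases on $(0,\rho^{*})$, strictly decreases on $(\rho^{*},+\infty)$, and attains its maximum $\lambda_{cr}^{(3)}:=\lambda(\rho^{*})\approx6.913562404$ at $\rho=\rho^{*}$. Consequently $\lambda(\rho)=\lambda$ has: no positive root for $\lambda>\lambda_{cr}^{(3)}$; the single root $\rho^{*}$ for $\lambda=\lambda_{cr}^{(3)}$; and exactly two positive roots (one in $(0,\rho^{*})$, one in $(\rho^{*},+\infty)$) for $0<\lambda<\lambda_{cr}^{(3)}$. Together with the translation-invariant solution $(z,z)$, which exists and is unique for every $\lambda>0$ by Lemma 1 and Remark~\ref{rm4}, this yields one solution $(z,z)$ for $\lambda>\lambda_{cr}^{(3)}$; two solutions $(z,z)$ and $(z',t')$ for $\lambda=\lambda_{cr}^{(3)}$; and three solutions $(z,z)$, $(z_{1},t_{1})$, $(z_{2},t_{2})$ for $0<\lambda<\lambda_{cr}^{(3)}$, as claimed.

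I expect the main obstacle to be this third step: carrying out the algebra that reduces the numerator of $\lambda'(\rho)$ to $-(\rho^{3}+3\rho^{2}-5\rho-2)$, and then extracting — via Cardano's formula applied to this cubic — the numerical constant $\lambda_{cr}^{(3)}\approx6.913562404$ from the (irrational) critical value $\rho^{*}$. A minor point requiring care is the behaviour at $\lambda=\lambda(1)=\tfrac{512}{81}<\lambda_{cr}^{(3)}$, where the $\rho$-branch passes through the point $(z,t)=(2,2)$ lying on the translation-invariant branch; away from this single value of $\lambda$ the three solutions listed are pairwise distinct.
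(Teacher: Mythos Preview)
Your proof is correct and takes a genuinely different route from the paper's.

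The paper eliminates $t$ by substituting the second equation of (\ref{e25}) into the first, obtaining a cubic in $\lambda$ with coefficients depending on $z$. One root $\lambda_1(z)=16z^5/(1+z)^4$ is the TI branch, one root $\lambda_2(z)$ is negative, and the remaining root
\[
\lambda_3(z)=\frac{\bigl(24z+8+8\sqrt{4z^3+9z^2+6z+1}\,\bigr)z^2}{(1+z)^4}
\]
carries a square root. To locate the maximum of $\lambda_3$ the paper squares the equation $\lambda_3'(z)=0$, arrives at the cubic $z^3-16z^2+41z+10=0$, solves it by Cardano, and then has to check which of the three real roots actually lies in the interval where the squaring step was legitimate. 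Unimodality of $\lambda_3$ is asserted rather than proved.

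Your parametrisation by $\rho=(z/t)^{1/3}$ is cleaner: it yields the purely rational expression $\lambda(\rho)=16\rho^2(1+\rho)^5/(1+\rho+\rho^2)^4$, whose logarithmic derivative has numerator $-(\rho^3+3\rho^2-5\rho-2)$. Descartes' rule then gives exactly one positive critical point, so unimodality follows immediately without any sign-tracking through a squaring step. The two descriptions are of course the same curve (your $z=\rho+\rho^2$ sends $\rho^*\approx1.44$ to the paper's $z_3\approx3.51$), but your argument is more elementary and more complete on this point. Your closing remark about the degeneracy at $\lambda=512/81$, where the $\rho$-branch meets the TI branch at $(2,2)$, is a nice observation that the paper does not mention.
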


\begin{proof}
Note that, by Remark \ref{rm4}, the solution $z=t$ corresponds to a unique TISGM. Therefore, it it sufficient to consider $z\neq t$.

In the system of equations (\ref{e25}), substituting the expression of $t$ from the second equation into the first, we obtain
$$ z=\lambda\frac{1+z}{2z}\left(\frac{16z^4+\lambda(1+z)^4}{2\lambda(1+z)^{4}} \right)^{3}.$$

Due to the complexity of solving the last equation with respect to $z$, we consider it as an equation with respect to the parameter $\lambda$. Then
$$\lambda^{3}\left(1+z\right)^{12}-16\lambda^{2}z^2(1+z)^8\Big(1+3z+z^3\Big)+768\lambda z^{8}(1+z)^{4}+4096z^{12}=0,$$
which solutions have the form:
$$\lambda_{1}(z)=\frac{16z^{5}}{(1+z)^{4}}, \qquad \lambda_{2}(z)=\frac{\left(24z+ 8-8\sqrt{4z^{3}+9z^{2}+6z+1}\right)z^{2}}{(1+z)^{4}},$$
\begin{equation} \label{e26}
\lambda_{3}(z)=\frac{\left(24z+8+8\sqrt{4z^{3}+9z^{2}+6z+1}\right)z^{2}}{( 1+z)^{4}}.
\end{equation}
Note that $\lambda_{1}(z)$ is obtained from (\ref{e25}) for $z=t$ and this solution corresponds to the TISGM $\mu_0$. In addition, it is easy to see that $\lambda_{2}(z)<0$ and $\lambda_{3}(z)>0$ for $z>0$.

We calculate the derivative of the function $\lambda_{3}(z)$. Then from the equation $\lambda_{3}^{'}(z)=0$ we get
\begin{equation} \label{e27}
\Big(3z^2-7z-2\Big)\sqrt{4z^{3}+9z^{2}+6z+1}=-\Big(2z^2-9z-2\Big)(1+z)^{2}.
\end{equation}
Both sides of the (\ref{e27}) can be squared if
\begin{equation} \label{e30}
\frac{7+\sqrt{73}}{6}\leq z\leq\frac{9+\sqrt{97}}{4}.
\end{equation}
Given the condition (\ref{e30}), let us square both sides of the equation (\ref{e27}). As a result, after some algebras we have
$$z^3(z^3-16z^2+41z+10)(1+z)^2=0.$$
Note that the roots of the equation (\ref{e27}), satisfying the condition (\ref{e30}), are among the solutions of the last equation. We find them using the Cardano formula:
$$z_1=\frac23\sqrt{133}\cdot\cos{\left(\frac{1}{3}\arctan{\frac{18}{1009}\sqrt{4119}}\right)}+\frac{16}{3}\approx12.71311048,$$
$$z_2=\frac23\sqrt{133}\cdot\cos{\left(\frac{1}{3}\arctan{\frac{18}{1009}\sqrt{4119}}+\frac{2\pi}{3}\right)}+\frac{16}{3}\approx-0.224040245,$$
$$z_3=\frac23\sqrt{133}\cdot\cos{\left(\frac{1}{3}\arctan{\frac{18}{1009}\sqrt{4119}}+\frac{4\pi}{3}\right)}+\frac{16}{3}\approx3.510929776.$$

It is clear that $\frac{7+\sqrt{73}}{6}\leq z_3\leq\frac{9+\sqrt{97}}{4}$ and the function $\lambda_{3}(z)$ at point $z_3$ reaches its maximum:
$$\lambda_{cr}^{(3)}=\lambda_{3}(z_3)\approx6.913562404.$$

Thus, each value of $\lambda_{3}$ in (\ref{e26}) corresponds to two values of $z$ for $0<\lambda<\lambda^{(3)}_{cr}$, one value of $z $ for $\lambda=\lambda^{(3)}_{cr}$ and equation (\ref{e26}) has no solution for $\lambda>\lambda^{(3)}_{cr}$ (see Fig. 4.).
\end{proof}

 \begin{figure}[h]
\center{\includegraphics[width=5cm]{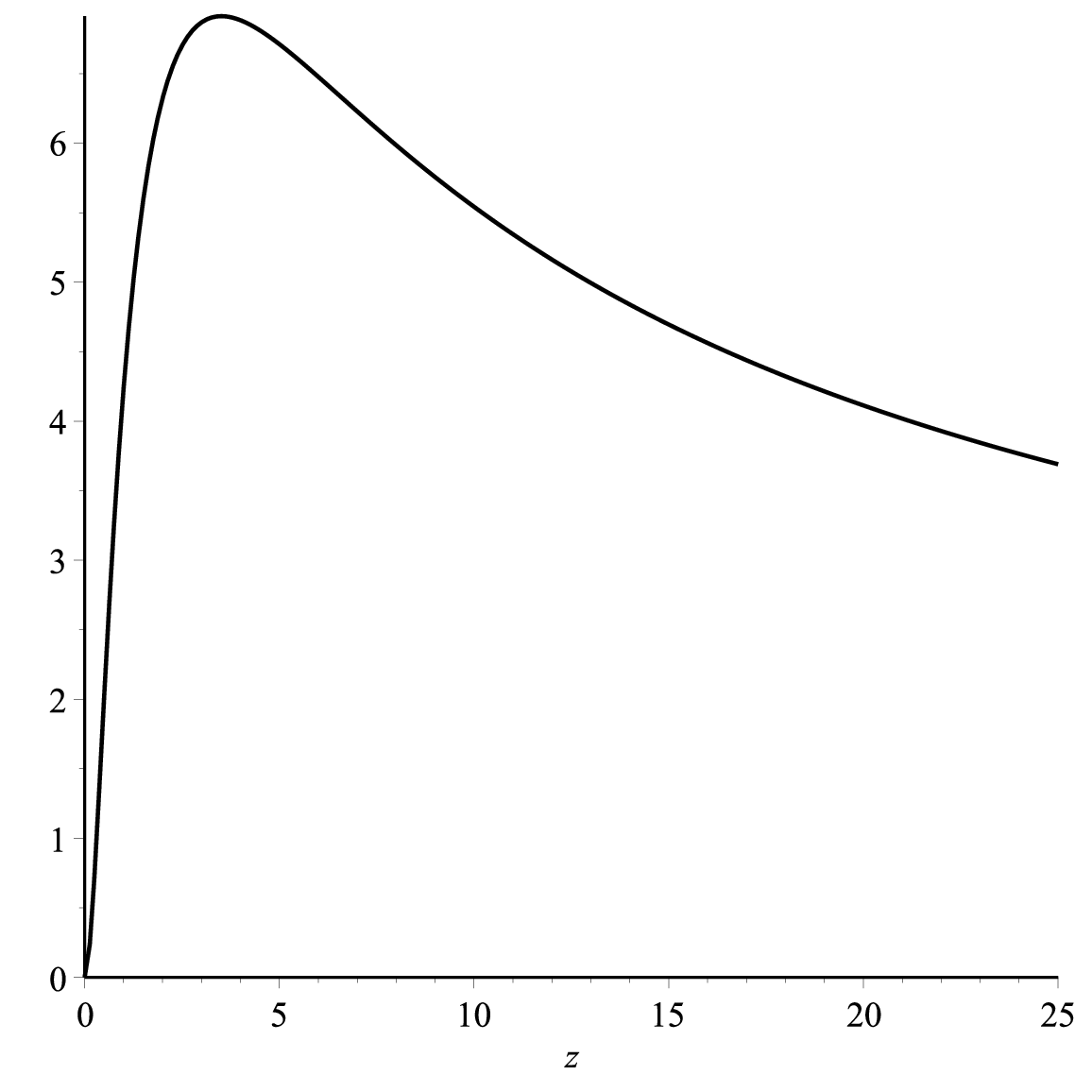}}
\label{fig4}
\caption{Graph of the function $\lambda_3(z)$. }
\end{figure}

\textbf{Case $k=4, m=1, r=1$.} In this case, the system of equations (\ref{e18}) has the form
\begin{equation} \label{e31}
\begin{cases}
    z=\lambda\left(\frac{1+z}{2z}\right)\left(\frac{1+t}{2t}\right)^{3},\\
    t=\lambda\left(\frac{1+t}{2t}\right)\left(\frac{1+z}{2z}\right)^{3}.
\end{cases}
\end{equation}

\begin{pro}\label{p4}
Let $\lambda_{cr}^{(4)}=1$. Then the system of equations (\ref{e31}) for $\lambda\geq\lambda_{cr}^{(4)}$ has a unique solution $(z,z)$, and for $0<\lambda<\lambda_{cr }^{(4)}$ has three solutions $ (z,z),\ (z_{1},z_{2}), \ (z_{2},z_{1}).$
\end{pro}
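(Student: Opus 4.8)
The plan is to reduce the two–variable system~(\ref{e31}) to a single scalar equation, exactly in the spirit of the proofs of Theorems~\ref{thm7} and~\ref{thm8}. Since here $k=4$ and $m=r=1$, so that $m+r=k-2$, dividing the first equation of~(\ref{e31}) by the second makes the exponents combine to
$$\frac{z}{t}=\left(\frac{2z}{1+z}\cdot\frac{1+t}{2t}\right)^{2}=\left(\frac{z(1+t)}{t(1+z)}\right)^{2}.$$
Clearing denominators gives $t(1+z)^2=z(1+t)^2$, and a short rearrangement factors this as $(t-z)(1-zt)=0$. Hence every positive solution of~(\ref{e31}) satisfies either $z=t$ or $zt=1$.

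In the case $z=t$ the system reduces to~(\ref{e6}) with $k=4$, namely $z=\lambda\left(\frac{1+z}{2z}\right)^4$, which by Lemma~1 has a unique positive root for every $\lambda>0$; by Remark~\ref{rm4} this root yields the TISGM $\mu_0$. In the case $zt=1$ with $z\neq t$, I would put $t=1/z$ into the first equation; then $\frac{1+t}{2t}=\frac{1+z}{2}$, and~(\ref{e31}) collapses to $16z^2=\lambda(1+z)^4$, i.e.
$$\lambda=\lambda(z):=\frac{16z^2}{(1+z)^4}.$$
This last equation is invariant under $z\leftrightarrow 1/z$, so its positive roots come in reciprocal pairs, which on the level of~(\ref{e31}) means the $z\neq t$ solutions appear as pairs $(z_1,z_2)$ and $(z_2,z_1)$ with $z_2=1/z_1$.

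It remains to count the positive roots of $\lambda(z)=\lambda$. A direct computation gives
$$\lambda'(z)=\frac{32z(1-z)}{(1+z)^5},$$
so $\lambda(z)$ is increasing on $(0,1)$, decreasing on $(1,\infty)$, vanishes at $0$ and at $\infty$, and attains its maximum $\lambda(1)=1=\lambda_{cr}^{(4)}$. Therefore, for $\lambda>\lambda_{cr}^{(4)}$ the case $zt=1$ produces no admissible $z$; for $\lambda=\lambda_{cr}^{(4)}$ it produces only $z=1$, which is exactly the diagonal point $z=t=1$ and hence not a new solution; and for $0<\lambda<\lambda_{cr}^{(4)}$ it produces exactly two values $z_1<1<z_2=1/z_1$. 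Combining with the $z=t$ case we obtain the unique solution $(z,z)$ for $\lambda\geq\lambda_{cr}^{(4)}$, and exactly three solutions $(z,z)$, $(z_1,z_2)$, $(z_2,z_1)$ for $0<\lambda<\lambda_{cr}^{(4)}$.

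There is no genuinely hard step here: the argument is elementary algebra together with a one–variable monotonicity analysis, parallel to Proposition~\ref{p1}. The only point requiring care is the bookkeeping at the critical value $\lambda=\lambda_{cr}^{(4)}=1$, where one must check that the candidate extra root coming from $zt=1$ coincides with the diagonal solution $z=t=1$ and so is not double–counted; this is precisely why uniqueness is asserted for $\lambda\geq\lambda_{cr}^{(4)}$ and not merely for $\lambda>\lambda_{cr}^{(4)}$.
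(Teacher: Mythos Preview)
Your proof is correct and follows essentially the same route as the paper: divide the two equations to get $(z-t)(zt-1)=0$, dispose of the diagonal via Lemma~1, and on the branch $zt=1$ reduce to $16z^2=\lambda(1+z)^4$. The only cosmetic difference is that the paper solves this last equation explicitly by the quadratic formula, obtaining $z_{1,2}=\big(2-\sqrt{\lambda}\mp 2\sqrt{1-\sqrt{\lambda}}\big)/\sqrt{\lambda}$, whereas you count roots via the monotonicity of $\lambda(z)=16z^2/(1+z)^4$; your treatment of the boundary case $\lambda=1$ is in fact a bit more careful than the paper's.
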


\begin{proof}
Dividing the first equation in (\ref{e31}) by the second, we get
$$(z-t)(z\cdot t-1)=0.$$
Hence $z=t$ or $z\cdot t=1.$ Note that the solution $(z,z)$ corresponds to a unique TISGM.

Let $z\neq t$, i.e. $t=\frac{1}{z}$. Then from the second equation of the system of equations (\ref{e31}) we have
\begin{equation} \label{e32}
\lambda(1+z)^{4}-16z^{2}=0,
\end{equation}
which solutions have the form
$$z_1=\frac{2-\sqrt{\lambda}-2\sqrt{1-\sqrt{\lambda}}}{\sqrt{\lambda}}, \qquad z_2=\frac{2-\sqrt{ \lambda}+2\sqrt{1-\sqrt{\lambda}}}{\sqrt{\lambda}}.$$
It is clear that $0<\lambda\leq1=\lambda_{cr}^{(4)}$.
For $\lambda=1$ we have a unique solution $z_1=z_2$, which corresponds to a unique TISGM.

For $0<\lambda<\lambda_{cr}^{(4)}$, due to the equality $zt=1$ we have
$$t_1=\frac{1}{z_1},\ t_2=\frac{1}{z_2}.$$

In addition, it is obvious that $z_1\cdot z_2=1$. Hence, for $0<\lambda<\lambda_{cr}^{(4)}$, the system of equations (\ref{e32}) has two solutions of the form $(z_{1},z_{2})$ and $(z_{ 2},z_{1}).$
\end{proof}

\textbf{Case $k=4, m=2, r=0$.} In this case, the system of equations (\ref{e18}) has the form
\begin{equation} \label{e33}
\begin{cases}
    z=\lambda\left(\frac{1+z}{2z}\right)^{2}\left(\frac{1+t}{2t}\right)^{2},\\
    t=\lambda\left(\frac{1+z}{2z}\right)^{4}.
\end{cases}
\end{equation}

\begin{pro}\label{p5}
Let $\lambda_{cr}^{(5)}=\frac{27}{16}$. Then the system of equations (\ref{e33}) for $\lambda>\lambda_{cr}^{(5)}$ has a unique solution $(z,z)$, for $\lambda=\lambda_{cr}^{ (5)}$ has two solutions $(z,z)$, $(3,\frac{1}{3})$ and for $0<\lambda<\lambda_{cr}^{(5)}$ has three solutions $(z,z)$, $(z_{1},t_{1})$, $(z_{2},t_{2})$.
\end{pro}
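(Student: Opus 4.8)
The plan is to follow the scheme already used in Propositions~\ref{p2}--\ref{p4}: reduce the two‑variable system~(\ref{e33}) to a single scalar equation, split off the translation‑invariant branch, and analyse the remaining branch by writing $\lambda$ as an explicit function of one variable. First I divide the first equation of~(\ref{e33}) by the second. Using
$$\left(\tfrac{1+z}{2z}\right)^{2}\left(\tfrac{1+t}{2t}\right)^{2}\Big/\left(\tfrac{1+z}{2z}\right)^{4}=\left(\tfrac{z(1+t)}{t(1+z)}\right)^{2},$$
one gets $\dfrac{z}{t}=\left(\dfrac{z(1+t)}{t(1+z)}\right)^{2}$, and after clearing denominators (all quantities positive) this is equivalent to $t(1+z)^{2}=z(1+t)^{2}$, i.e. to $(z-t)(zt-1)=0$. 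Hence either $z=t$ or $zt=1$, exactly as in Propositions~\ref{p2} and~\ref{p4}.

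If $z=t$, substituting back into the second equation of~(\ref{e33}) reproduces equation~(\ref{e6}) with $k=4$; by Lemma~1 it has a unique positive solution for every $\lambda>0$, which by Remark~\ref{rm4} corresponds to the TISGM $\mu_{0}$. This supplies the solution $(z,z)$ in all three regimes.

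If $zt=1$ with $z\ne t$ (so $z\ne 1$ and $t=1/z$), then $\tfrac{1+t}{2t}=\tfrac{1+z}{2}$, and substituting $t=1/z$ into the second equation of~(\ref{e33}) gives $\tfrac1z=\lambda\tfrac{(1+z)^{4}}{16z^{4}}$, that is $16z^{3}=\lambda(1+z)^{4}$; solving for $\lambda$, $\lambda=\Lambda(z):=\dfrac{16z^{3}}{(1+z)^{4}}$ with $z>0$. I would then compute $\Lambda'(z)=\dfrac{16z^{2}(3-z)}{(1+z)^{5}}$, so that $\Lambda$ increases strictly on $(0,3)$, decreases strictly on $(3,\infty)$, tends to $0$ as $z\to0^{+}$ and as $z\to\infty$, and attains its maximum $\Lambda(3)=\dfrac{16\cdot27}{4^{4}}=\dfrac{27}{16}=\lambda_{cr}^{(5)}$ at $z=3$ (where $t=\tfrac13$). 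Reading off the level sets of $\Lambda$: the equation $\Lambda(z)=\lambda$ has no positive root for $\lambda>\lambda_{cr}^{(5)}$; the single root $z=3$ for $\lambda=\lambda_{cr}^{(5)}$; and exactly two positive roots $z_{1}\in(0,3)$, $z_{2}\in(3,\infty)$ for $0<\lambda<\lambda_{cr}^{(5)}$, with $t_{i}=1/z_{i}$. Combining these with the diagonal branch gives the claimed counts: the single solution $(z,z)$ for $\lambda>\lambda_{cr}^{(5)}$; the two solutions $(z,z)$, $(3,\tfrac13)$ for $\lambda=\lambda_{cr}^{(5)}$; and the three solutions $(z,z)$, $(z_{1},t_{1})$, $(z_{2},t_{2})$ for $0<\lambda<\lambda_{cr}^{(5)}$.

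I do not expect a genuine obstacle: once the reduction $(z-t)(zt-1)=0$ is in place, everything reduces to elementary calculus on $\Lambda$. The points requiring care are the differentiation of $\Lambda$ together with the sign analysis of $\Lambda'$, the evaluation $\Lambda(3)=\tfrac{27}{16}$, and checking that the listed solutions are pairwise distinct — the two roots of $\Lambda(z)=\lambda$ lie in the disjoint intervals $(0,3)$ and $(3,\infty)$, and each yields a point $(z_{i},1/z_{i})$ lying off the diagonal $z=t$, hence different from the solution of~(\ref{e6}), except at the single value $\lambda=1$, where $z_{1}=1$ and that point merges with the diagonal solution (the only borderline case one must keep in mind).
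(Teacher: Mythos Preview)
Your proof is correct and follows essentially the same route as the paper: both divide the equations to obtain the factorisation $(z-t)(zt-1)=0$, then analyse the branch $t=1/z$ via the resulting one-variable relation $16z^{3}=\lambda(1+z)^{4}$. The only cosmetic difference is that the paper substitutes $x=z^{1/4}$, $a=\lambda^{-1/4}$ to study the quartic $x^{4}-2ax^{3}+1=0$ and locates its minimum, whereas you analyse $\Lambda(z)=16z^{3}/(1+z)^{4}$ directly; your version is slightly cleaner and, as a bonus, your closing remark about the coincidence at $\lambda=1$ (where one off-diagonal solution collapses onto the diagonal one) flags an edge case the paper's argument does not address.
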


\begin{proof}
Similar to the previous case, from (\ref{e33}) we obtain $z=t$ or $z\cdot t=1$.
The case $z=t$ has already been considered.

We consider the case $z\neq t$, i.e. $t=\frac{1}{z}$. Then from the second equation (\ref{e33}), denoting $\sqrt[4]{z}=x$ and $\frac{1}{\sqrt[4]{\lambda}}=a$, we have
\begin{equation} \label{e34}
x^{4}-2ax^{3}+1=0.
\end{equation}

ÀAnalyzing the function $f(x)=x^{4}-2ax^{3}+1$, we obtain that the function $f(x)$ decreases for $x<x_{0}$, and increases for $x>x_{0}$, where $x_0=x_{\min}=\frac{3a}{2}$ and $f_{\min}=f(x_{0})=-\frac{27}{16}a^{4}+1.$ Hence, equation (\ref{e34}) has no solution if $f(x_{0})>0$ (i.e. $\lambda>\frac{27}{16}=\lambda_{cr}^{(5)}$), has one positive solution $x=\sqrt[4]{3}$ if $f(x_{0})=0$ (i.e. $\lambda=\lambda_{cr}^{(5)}$) and has two positive solutions if $f(x_{0})<0$ (i.e. $0<\lambda<\lambda_{cr} ^{(5)}$).
\end{proof}

Using Propositons \ref{p3}, \ref{p4} and \ref{p5}, we get
\begin{thm}
Let $k=4$. Then for the HC model in the case of $G=\textit{wand}$ on the $I_{4}$ the following statements are true:

1. Let $m=1$, $r=0$ or $m=0$, $r=1$. Then there exists $\lambda^{(3)}_{cr}$ $(\approx6.913562)$ such that for $\lambda>\lambda_{cr}^{(3)}$ there is a unique AGM which coincides with TISGM $\mu_0$, for $\lambda=\lambda_{cr}^{(3)}$ there are two AGMs $\mu_0$, $\mu^{'}$, where $\mu^{'}$ is AGM(not TI), and for $0<\lambda<\lambda_{cr}^{(3)}$ there are exactly three AGMs $\mu_0$, $\mu^{'}_1$ and $\mu^{'}_2$, where $\mu^{'}_1$ and $\mu^{'}_2$ are AGM (not TI).

2. Let $m=1$, $r=1$ and $\lambda^{(4)}_{cr}=1$. Then for $\lambda\geq\lambda^{(4)}_{cr}$ there is one AGM which coincides with TISGM $\mu_0$, and for $0<\lambda<\lambda^{(4)}_{ cr}$ there are exactly three AGMs $\mu_0$, $\ddot{\mu}_1$ and $\ddot{\mu}_2$, where $\ddot{\mu}_1$ and $\ddot{\mu}_2$ are AGM (not TI).

3. Let $m=2$, $r=0$ or $m=0$, $r=2$ and $\lambda^{(5)}_{cr}=\frac{27}{16}$. Then for $\lambda>\lambda^{(5)}_{cr}$ there is a unique AGM which coincides with the only TISGM $\mu_0$, for $\lambda=\lambda_{cr}^{(5)}$ there are two AGMs $\mu_0$, $\acute{\mu}$, where $\acute{\mu}$ is AGM (not TI), and for $\lambda<\lambda^{(5) }_{cr}$ there are exactly three AGMs $\mu_0$, $\acute{\mu}_1$ and $\acute{\mu}_2$, where $\acute{\mu}_1 $ and $\acute{\mu}_2$ are AGM (not TI).
\end{thm}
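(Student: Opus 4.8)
The plan is to obtain the statement directly from Propositions \ref{p3}, \ref{p4} and \ref{p5}, in which all of the analytic work has already been carried out; what remains is only to translate ``number of positive solutions of the reduced system on $I_4$'' into ``number of AGMs'' and to decide which of the resulting measures is translation invariant.

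First I would recall the general correspondence. By Theorem \ref{thm1} (and the discussion around (\ref{e4}) preceding Remark \ref{rm1}), every solution of the functional equation on the half-tree that lies in the invariant set $I_4$ --- that is, every solution $(z,t)$ of the system (\ref{e18}) with the prescribed pair $m,r$ --- determines a unique splitting Gibbs measure, and two distinct solutions give two distinct measures. By Remark \ref{rm4} (applied through $I_1\subset I_4$), the diagonal solution $z=t$ is precisely the one producing the unique TISGM $\mu_0$; any solution with $z\neq t$ corresponds to a non-constant field, so its measure is not translation invariant, i.e.\ it is an AGM that is not TI.

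Next, for each of the three cases I would invoke the corresponding proposition and simply count. For $m=1,r=0$ (equivalently $m=0,r=1$), Proposition \ref{p3} says that (\ref{e25}) has a unique solution $(z,z)$ for $\lambda>\lambda_{cr}^{(3)}$, the two solutions $(z,z)$ and $(z',t')$ with $z'\neq t'$ for $\lambda=\lambda_{cr}^{(3)}$, and the three solutions $(z,z),(z_1,t_1),(z_2,t_2)$ for $0<\lambda<\lambda_{cr}^{(3)}$; feeding these through the correspondence above gives exactly statement~1, with $\mu'$, $\mu_1'$, $\mu_2'$ the measures attached to the non-diagonal solutions. The identical bookkeeping applied to Proposition \ref{p4} for $m=r=1$ (the regimes $\lambda\ge\lambda_{cr}^{(4)}=1$ and $0<\lambda<1$, carrying $1$ resp.\ $3$ solutions, the non-diagonal ones forming a $z\leftrightarrow t$ swap-pair since the system (\ref{e31}) is symmetric) yields statement~2, and applied to Proposition \ref{p5} for $m=2,r=0$ (equivalently $m=0,r=2$; the three regimes separated by $\lambda_{cr}^{(5)}=27/16$, carrying $1$, $2$, $3$ solutions) yields statement~3.

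There is essentially no remaining obstacle: the difficult steps --- pinning down the exact solution counts and the explicit critical values, in particular $\lambda_{cr}^{(3)}\approx 6.913562$, which required squaring (\ref{e27}), reducing it to the cubic $z^3-16z^2+41z+10=0$ and applying Cardano's formula --- were already done in the proofs of Propositions \ref{p3}--\ref{p5}. The only delicate point to keep in mind is the injectivity of the map from solutions of (\ref{e18}) to Gibbs measures and the fact that a non-diagonal solution yields a genuinely non-TI measure; both are consequences of the standard one-to-one correspondence between SGMs and solutions of (\ref{e4}) together with the definition of translation invariance (a TI measure comes from a constant field $z_x\equiv z$, i.e.\ from $z=t$).
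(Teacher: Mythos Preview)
Your proposal is correct and matches the paper's approach exactly: the paper states the theorem immediately after the words ``Using Propositions \ref{p3}, \ref{p4} and \ref{p5}, we get'' with no further proof, so the entire argument is precisely the solution-to-measure bookkeeping you spell out. If anything, you have made the passage from ``solutions of (\ref{e18})'' to ``AGMs, TI or not'' more explicit than the paper does.
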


\section{Relation of the Alternative Gibbs measures to known ones}\

\textit{Translation invariant measures.} (see \cite{Ro}, \cite{XR1}, \cite{RKh1}) Such measures correspond to $z_x\equiv z$, i.e. constant functions. These measures are particular cases of our measures mentioned which can be obtained for $m=k$, i.e. $k-m=0$.
In this case the condition (\ref{e4}) reads
\begin{equation}\label{e35}
\begin{cases}
    z_{1}=\lambda\left(\frac{1+z_{1}}{z_{1}+z_{2}}\right)^k, \\
    z_{2}=\lambda\left(\frac{1+z_{2}}{z_{1}+z_{2}}\right)^k.
\end{cases}
\end{equation}
The equation (\ref{e35}) was studied in Section \ref{sec2}.

\textit{Periodic Gibbs measures.} (see \cite{KhU1}) Let $G_k$ be a free product of $k+1$ cyclic groups of the second order with generators $a_1,a_2,...,a_{k+1},$
respectively.

It is known that there exists a one-to-one correspondence between the set of vertices $V$ of the Cayley tree $\Im^k$ and the group $G_k$.

Let $G^{(2)}_k=\{x\in G_k : \mbox{the length of word} ~x~ \mbox{is even} \}.$
Note that $G^{(2)}_k$ is the set of even vertices (i.e. with even distance to the root). Consider the boundary condition $z$ and $t$:
$$z_x=
\begin{cases} z, ~ \mbox{if}~ x\in G^{(2)}_k,\\
 t, ~ \mbox{if}~ x\in G_k\setminus G^{(2)}_k.
\end{cases}$$
and denote by $\mu_{1}$, $\mu_{2}$ the corresponding Gibbs measures.
The $\widehat{G}$- periodic solutions of equation (\ref{e4}) are either translation-invariant ($G_k$- periodic) or $G^{(2)}_k$
-periodic, they are solutions to
$$
\begin{cases}
    z_1=\lambda\cdot \left({1+t_1\over t_1+t_2}\right)^{k}, \\
    z_2=\lambda\cdot \left({1+t_2\over t_1+t_2}\right)^{k}, \\
    t_1=\lambda\cdot \left({1+z_1\over z_1+z_2}\right)^{k}, \\
    t_2=\lambda\cdot \left({1+z_2\over z_1+z_2}\right)^{k}.
\end{cases}
$$
We note that these measures are particular cases of measures of AGM which
can be obtained for $m=r=0$ (see Fig. 5., for $k=4$).

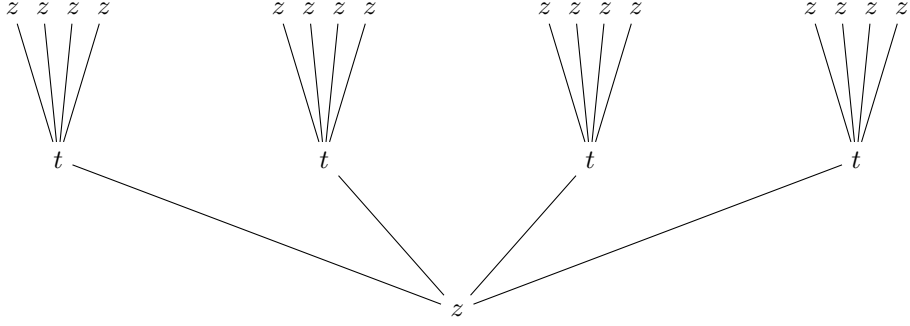
\begin{figure}
\scalebox{1}{
\begin{tikzpicture}[level distance=2.0cm,
level 1/.style={sibling distance=1.8cm},
level 2/.style={sibling distance=2.0cm},
level 3/.style={sibling distance=.2cm}]
\node {$z$} [grow'=up]
    child[sibling distance=3.5cm] {node {$t$}
        child[sibling distance=.4cm] foreach \name in {z,z,z,z} { node
        {$\name$} }
    }
    child[sibling distance=3.5cm] {node {$t$}
        child[sibling distance=.4cm] foreach \name in {z,z,z,z} { node
        {$\name$} }
    }
    child[sibling distance=3.5cm] {node {$t$}
        child[sibling distance=.4cm] foreach \name in {z,z,z,z} { node
        {$\name$} }
    }
    child[sibling distance=3.5cm] {node {$t$}
        child[sibling distance=.4cm] foreach \name in {z,z,z,z} { node
        {$\name$} }
    }
;
\end{tikzpicture}
}
\label{fig5}
\caption {In this figure the values of function $z_x$ on the vertices of the Cayley tree
of order 4 are shown.}
\end{figure}

\textit{Weakly periodic Gibbs measures.} Following \cite{8} we recall the notion of weakly periodic Gibbs measures. Let
$G_k/\widehat{G}_k=\{H_1, ..., H_r\}$ be a factor group, where $\widehat{G}_k$ is a normal subgroup of index $r>1$.

For any $x\in G_k $, the set $\{y\in G_k: \langle
x,y\rangle\}\setminus S(x)$ contains a unique element denoted by $x_{\downarrow}$ (see \cite{8}, \cite{5}).

\begin{defn}
A collection of quantities $z=\{z_x,x\in G_k\}$ is called $\widehat{G}_k$-weakly periodic if $z_x=z_{ij}$ for any $x\in H_i$, $x_\downarrow \in H_j$ for any $x \in G_k$.
\end{defn}

We recall results known for the cases of index two. Note that any such subgroup
has the form
$$H_A=\Big\{x\in G_k:\sum_{i\in A}{w_x(a_i)}~\mbox{is even}\Big\}$$
where ${\emptyset}\neq A\subseteq N_k=\{1,2,...,k+1\}$, and $w_x(a_i)$ is the number of $a_i$ in a word $x\in G_k.$
We consider $A\neq N_k$: when $A=N_k$ weakly periodicity coincides with standard periodicity. Let $G_k/H_A=\{H_0,H_1\}$ be the factor group, where $H_0=H_A$,
$H_1=G_k\setminus H_A$. Then, in view of (\ref{e4}), the $H_A$ - weakly periodic b.c. has the form

$$z_x=
\begin{cases} z, \ x\in H_A, \ x_\downarrow \in H_A, \\
t, \ x\in H_A, \ x_\downarrow \in G_k\setminus H_A,\\
q, \ x\in G_k\setminus H_A, \ x_\downarrow \in H_A, \\
p, \ x\in G_k\setminus H_A, \ x_\downarrow \in G_k\setminus H_A.
\end{cases}$$
which they satisfy the following equations:

\begin{equation}\label{e36}
\begin{cases}
z_1=\lambda\Big(\frac{1+z_1}{z_1+z_2}\Big)^{k-i}\cdot\Big(\frac{1+q_1}{q_1+q_2}\Big)^{i}, \\
z_2=\lambda\Big(\frac{1+z_2}{z_1+z_2}\Big)^{k-i}\cdot\Big(\frac{1+q_2}{q_1+q_2}\Big)^{i}, \\
t_1=\lambda\Big(\frac{1+z_1}{z_1+z_2}\Big)^{k+1-i}\cdot\Big(\frac{1+q_1}{q_1+q_2}\Big)^{i-1}, \\
t_2=\lambda\Big(\frac{1+z_2}{z_1+z_2}\Big)^{k+1-i}\cdot\Big(\frac{1+q_2}{q_1+q_2}\Big)^{i-1},  \\
q_1=\lambda\Big(\frac{1+p_1}{p_1+p_2}\Big)^{k+1-i}\cdot\Big(\frac{1+t_1}{t_1+t_2}\Big)^{i-1},  \\
q_2=\lambda\Big(\frac{1+p_2}{p_1+p_2}\Big)^{k+1-i}\cdot\Big(\frac{1+t_2}{t_1+t_2}\Big)^{i-1},  \\
p_1=\lambda\Big(\frac{1+p_1}{p_1+p_2}\Big)^{k-i}\cdot\Big(\frac{1+t_1}{t_1+t_2}\Big)^{i}, \\
p_2=\lambda\Big(\frac{1+p_2}{p_1+p_2}\Big)^{k-i}\cdot\Big(\frac{1+t_2}{t_1+t_2}\Big)^{i}.
\end{cases}
\end{equation}

It is obvious that the following sets are invariant with respect to the operator $W:R^8\rightarrow R^8$ defined by RHS of (\ref{e36}):
$$I_1=\Big\{z_1=z_2=t_1=t_2=q_1=q_2=p_1=p_2\Big\}, \quad I_2=\Big\{ z_1=t_1=q_1=p_1; \ z_2=t_2=q_2=p_2\Big\}$$
$$I_3=\Big\{z_1=p_1;\ z_2=p_2;\ q_1=t_1;\ q_2=t_2\Big\}, \qquad I_4=\Big\{ z_1=z_2; \ t_1=t_2; \ q_1=q_2; \ p_1=p_2\Big\}$$
It is obvious to see that

$\bullet$ measures corresponding to solutions on $I_1$ and $I_2$ are translation invariant

$\bullet$ measures corresponding to solutions on $I_3$ are weakly periodic, which coincide with the measures given for $m=k-i$, $k-m=i$, $r=i-1$, $k-r=k-i+1$.

\section{data availability statement}
Not applicable

\section{conflicts of interest}
The author declares that they have no conflict of interest.


\begin{thebibliography}{99}
\bibitem{Ba} Baxter R.J.: Exactly Solved Models in Statistical Mechanics. Academic, London (1982)

\bibitem{bw1} Brightwell G., Winkler P.: Hard constraints and the Bethe lattice: adventures at the interface of combinatorics and statistical physics. In: Proceedings of the ICM 2002, vol. III, pp. 605-624. Higher Education Press, Beijing (2002)

\bibitem{bw} Brightwell G., Winkler P.: Graph homomorphisms and phase transitions. \textit{J.Combin. Theory Ser.B.}, -- 1999. -- {\bf 77}, -- P. 221-262.

\bibitem{BR} Bogachev L.V., Rozikov U.A.: On the uniqueness of Gibbs measure in the Potts model on a Cayley tree with external field.
\textit{J. Stat. Mech. Theory Exp.} 2019, no. 7, 76 pp.

\bibitem{gk} Galvin D., Kahn J.: On phase transition in the hard-core model on $Z^d$. \textit{Comb. Prob. Comp.} \textbf{13}, 137-164 (2004)

\bibitem{11} Ganikhodzhaev N.N.: Group representation and automorphisms of the Cayley tree, \textit{Dokl. Akad. Nauk Resp. Uzbekistan} \textbf{4}, pp. 3--5 (1994).

\bibitem{GR2003} Ganikhodzhaev N.N., Rozikov U.A.: Group representation of the Cayley forest and some of its applications, \textit{Izv. Math.} \textbf{67}, 17--27 (2003).

\bibitem{1} Ganikhodzhaev N.N., Rozikov U.A.: Description of periodic extreme Gibbs measures of some lattice
models on the Cayley tree,  \textit{Theor. Math. Phys.} \textbf{111}, 480--486 (1997).

\bibitem{6} Georgii H.-O.: Gibbs Measures and Phase Transitions (De Gruyter Stud. Math., Vol. 9), Walter de Gruyter, Berlin (1988)

\bibitem{XR1}  Khakimov R.M.: Translation-invariant Gibbs measures for fertile three-state "Hard Core"\ models on a Cayley tree.  \textit{Theor. Math. Phys.} -- 2015. -- \textbf{183}, \ No~3. -- P. 441--449.

\bibitem{KhU2} Khakimov R.M., Umirzakova K.O.: The uniqueness conditions for periodic Gibbs measures for one HC-models with three states, \textit{Bulletin of the Institute of Mathematics}, 2020, \ No~4. pp. 113-119.

\bibitem{KhU} Khakimov R.M., Umirzakova K.O.: Extremality of the unique translation-invariant Gibbs measure for Hard-Core models on the Cayley tree of order $k=3$. \textit{Theor. Math. Phys.} 2021. \textbf{206}, \ No~1. P. 97-108.

\bibitem{KhU1} Khakimov R.M., Umirzakova K.O. Periodic Gibbs measures for three-state hard-core models in the case Wand. \textit{Journal of Mathematical Physics, Analysis, Geometry.} Preprint, 2023, arXiv: 2007.14611v2 [math-ph].

\bibitem{KKR}  Kissel S., K\"ulske C., Rozikov U.A.: Hard-core and soft-core Widom-Rowlinson models on Cayley trees. \textit{Jour. Stat. Mech.: Theory and Exper.} -- 2019. P043204. 22 pages.

\bibitem{MRS}  Martin J.B., Rozikov U.A., Suhov Yu.M.: A three state hard-core model on a Cayley tree. \textit{J. Nonlin. Math. Phys.} -- 2005. -- \textbf{12}, \ No~3. -- P. 432--448.

\bibitem{Maz} Mazel A.E., Suhov Yu.M.:  Random surfaces with two-sided constraints: an application of the theory of dominant ground states, \textit{Jour. Statist. Phys.},  1991.   \textbf{64},  P. 111-134.

\bibitem{Pr} Preston C. J.: Gibbs States on Countable Sets (Cambridge Tracts Math., Vol. 68), \textit{Cambridge Univ. Press}, Cambridge (1974)

\bibitem{R} Rozikov U.A.: Gibbs measures on Cayley trees. p.404, World Sci. Publ., Singapore (2013)

\bibitem{R2} Rozikov U.A.: Gibbs Measures in Biology and Physics: The Potts Model. p.368, World Sci. Publ., Singapore (2022). https://www.worldscientific.com/worldscibooks/10.1142/12694\#t=aboutBook

\bibitem{RKh1}  Rozikov U.A., Khakimov R.M.:  Gibbs measures for the fertile three-state hard core models on a Cayley tree. \textit{Queueing Systems.} -- 2015. -- \textbf{81}, \ No~1. -- P. 49--69.

\bibitem{RKhM} Rozikov U.A., Khakimov R.M. and Makhammadaliev M.T.: Gibbs Periodic Measures for a Two-State HC-Model on a Cayley Tree [in Russian], \textit{Contemporary Mathematics. Fundamental Directions} 2022, Vol. 68, No. 1, 95--109

\bibitem{8} Rozikov U.A., Rakhmatullaev M.M.: Description of weakly periodic Gibbs measures for the Ising model on a Cayley tree, \textit{ Theor. Math. Phys.} \textbf{156}: 2 (2008), 1218-1227.

\bibitem{Ro} Rozikov U.A., Shoyusupov Sh.A.: Fertile HC models with three states on a Cayley tree. \textit{Theor. Math. Phys.} -- 2008.
-- \textbf{156}, \ No~3. -- P. 1319--1330.


\bibitem{Si} Sinai Ya.G.: Theory of Phase Transitions: Rigorous Results [in Russian], Nauka, Moscow (1980); English transl. (Intl. Series Nat. Philos., Vol. 108), Pergamon, Oxford (1982)

\bibitem{SR} Suhov Yu.M., Rozikov U.A.: A hard-core model on a Cayley tree: an example of a loss network, \textit{Queueing Systems}, \textbf{46} (2004), 197-212.

\bibitem{5} Zachary S.: Countable state space Markov random fields and Markov chains on trees, \textit{Ann. Probab.} \textbf{11} (1983), 894--903.

\bibitem{ZZ} Zachary S., Ziedins I.: Loss Networks. Queueing Networks. \textit{International Series in Operations Research and Management Science.} (2011). Vol. 154. p. 701-728.
\end{thebibliography}
\end{document}